\newlength{\capwidth}
\newtheorem{Theorem}{Theorem}
\newtheorem{Lemma}{Lemma}
\newcommand{\yv}{\mbox{$\bf y $}}
\newcommand{\av}{\mbox{$\bf a $}}
\newcommand{\bv}{\mbox{$\bf b $}}
\newcommand{\Hv}{\mbox{$\bf H $}}
\newcommand{\Av}{\mbox{$\bf A $}}
\newcommand{\xv}{\mbox{$\bf x $}}
\newcommand{\Xv}{\mbox{$\bf X $}}
\newcommand{\Pv}{\mbox{$\bf P $}}
\newcommand{\nv}{\mbox{$\bf n $}}
\newcommand{\cv}{\mbox{$\bf c $}}
\newcommand{\Cv}{\mbox{$\bf C $}}
\newcommand{\ev}{\mbox{$\bf e $}}
\newcommand{\Bv}{\mbox{$\bf B $}}
\newcommand{\Qv}{\mbox{$\bf Q $}}
\newcommand{\Wv}{\mbox{$\bf W $}}
\newcommand{\Iv}{\mbox{$\bf I $}}
\newcommand{\Rv}{\mbox{$\bf R $}}
\newcommand{\Lv}{\mbox{$\bf L $}}
\newcommand{\Gv}{\mbox{$\bf G $}}
\newcommand{\uv}{\mbox{$\bf u $}}
\newcommand{\zv}{\mbox{$\bf z $}}
\newcommand{\Uv}{\mbox{$\bf U $}}
\newcommand{\Vv}{\mbox{$\bf V $}}
\newcommand{\be}{\begin{equation}}
\newcommand{\ee}{\end{equation}}
\newcommand{\bea}{\begin{eqnarray}}
\newcommand{\eea}{\end{eqnarray}}
\newcommand{\bdp}{\begin{displaymath}}
\newcommand{\edp}{\end{displaymath}}
\begin{document}
\title{\huge{Diversity of Linear Transceivers in MIMO AF Half-duplex Relaying Channels }}

\author{\IEEEauthorblockN{Changick Song and Cong Ling} \\
\thanks{This work was supported in part by FP7 project PHYLAWS (EU FP7-ICT 317562) and partially submitted to IEEE International Symposium on Information Theory (ISIT) 2014.

The authors are with the Department of Electrical and Electronic Engineering, Imperial College, London, UK
(e-mail: \{csong and c.ling\}@imperial.ac.uk).}
}\maketitle

\begin{abstract}
Linear transceiving schemes between the relay and the destination have recently attracted much interest in MIMO
amplify-and-forward (AF) relaying systems due to low implementation complexity.
In this paper, we provide comprehensive analysis on the diversity order of the linear zero-forcing (ZF) and minimum mean squared error (MMSE) transceivers.
Firstly, we obtain a compact closed-form expression for the diversity-multiplexing tradeoff (DMT) through tight upper and lower bounds.
While our DMT analysis accurately predicts the performance of the ZF transceivers,
it is observed that the MMSE transceivers exhibit a complicated rate dependent behavior,
and thus are very unpredictable via DMT for finite rate cases.
Secondly, we highlight this interesting behavior of the MMSE transceivers and characterize the diversity order at all finite rates.
This leads to a closed-form expression for the diversity-rate tradeoff (DRT) which reveals the relationship between
the diversity, the rate, and the number of antennas at each node.
Our DRT analysis compliments our previous work on DMT,
thereby providing a complete understanding on the diversity order of linear transceiving schemes in MIMO AF relaying channels.
\end{abstract}

\section{Introduction}\label{sec:introduction}

Recently, there has been growing interest in multiple-input multiple-output (MIMO) relaying techniques,
due to combined benefits of improved link performance from MIMO channels and coverage extension from relaying techniques.
In particular, although suboptimal, linear transceivers between the relay and the destination with the amplify-and-forward (AF) strategy
have attracted much attention for its low complexity implementation \cite{Guan:08,Changick:09TWC,CXing:10a,Rong:09,HJChoi:14}.
From a system design perspective, in order to find the operating points of the system and predict its performance,
analytical research on these transceiving schemes is highly motivated \cite{BChalise:09,Louie:09,Changick:11TCOM,MAhn:14,Changick:12JSAC},
but the performance has not been fully understood yet.

The {\it ``diversity-multiplexing tradeoff''} (DMT) analysis provides a fundamental criterion to evaluate the performance of MIMO systems since
it compactly characterizes the tradeoff between the rate and the block error probability \cite{Zheng:03}.
For this reason, a large amount of research has been conducted in MIMO relaying systems based on the DMT \cite{Azarian:05,SYang:07,SYang:07a,Yuksel:07,DGunduz:10,Oleveque:10,SLoyka:10,SYang:11}.
With the minimum mean squared error (MMSE) strategy, however, the DMT is not sufficient to characterize the diversity order, because
the DMT framework, as an asymptotic notion in the high signal-to-noise ratio (SNR) and high spectral efficiency regime,
cannot distinguish between different spectral efficiencies that correspond to the same multiplexing gain which we denote by $r$.
In fact, it was shown in point-to-point (P2P) MIMO channels
that while the DMT analysis accurately predicts the diversity behavior of the MMSE receiver for the positive multiplexing gain ($r>0$),
the extrapolation of the DMT to $r=0$ is unable to predict the performance especially at low rates.
This rate-dependent behavior of MMSE receivers has first been observed by Hedayat {\it et al.} in \cite{Hedayat:07}
and comprehensively analyzed by Mehana {\it et al.} in \cite{Mehana:12}.
A similar phenomenon can be observed in MIMO AF relaying systems, but
the analysis has not been made so far.

In the first part of the paper, we introduce a new design framework for linear transceiver optimization
in MIMO AF relaying systems utilizing the error covariance decomposition (ECD).
We would like to mention that the ECD approach was first suggested in \cite{Changick:09TWC}
for designing the MMSE transceiver under the assumption that the number of data-streams
is smaller than or equal to that of relay and destination antennas.
In fact, however, any restriction on the antenna configuration is unnecessary under the MMSE strategy,
because a certain diversity gain is always achievable as the rate becomes smaller
(this phenomenon will be addressed later in the analysis part).
Therefore, it is important to provide a new result of the ECD that can be applied to any kinds of antenna configurations.
We remark that our new approach not only generalizes the previous work in \cite{Changick:09TWC}, but also
provides a ECD framework which brings the ZF and MMSE strategies together.

In the second part of the paper, we present asymptotic analysis of the aforementioned linear transceivers.
We first focus on the DMT performance of the systems.
Previously, some DMT bounds have been found in \cite{Changick:12JSAC} for the MMSE transceiver, but they are loose in general.
In this paper, we characterize the exact DMT performance of the ZF and MMSE transceivers
as a closed-form expression through deriving tight upper and lower bounds.
Note that for the sake of comparison, our analysis also covers the naive schemes
where only a constant gain factor is applied at the relay without channel state information (CSI).
The resulting DMT reveals that all linear transceiving and naive schemes are suboptimal
in terms of the achievable diversity in MIMO relaying channels \cite{DGunduz:10} \cite{Oleveque:10}.
It is also shown that while the DMT is determined by the first-hop link for the linear transceivers,
the naive schemes depend on the minimum DMT of the first-hop and second-hop MIMO links,
which implies that the naive scheme is always inferior to the transceiving scheme, especially when the number of relay antennas is large.

While our DMT analysis accurately predicts the ZF transceivers,
it is observed that when the rate is finite, the MMSE transceivers exhibit a complicated rate dependent behavior, and thus are very unpredictable via DMT.
To address this issue, we alternatively approach the outage probability of the MMSE transceiver by setting the multiplexing gain zero.
This leads to a closed-form expression for the {\it diversity-rate tradeoff} (DRT) which reveals the relationship between
the diversity, spectral efficiency, and the number of antennas at each node.
We note that under the DRT formulation, the analysis must be conducted more carefully compared to the DMT since
certain ratios and terms that were simply ignored in the DMT analysis may be relevant.
The presented bounds are tight except some discontinuity rate points, and thus can precisely predict the
diversity behavior of the MMSE transceiver.
Interestingly, we observe that as the rate becomes smaller, the MMSE transceiver approaches
the maximum likelihood (ML) performance \cite{Tang:07} \cite{Olga:07} with full-diversity order of the MIMO relay channel.
In contrast, however, the full-diversity order may not be achievable with the naive-MMSE scheme
no matter how small the rate is, which reveals the importance of
the CSI at the relay for obtaining a proper diversity gain in MMSE-based relaying systems.
Our DMT and DRT analyses are complementary to each other, thereby
allowing us to obtain a complete understanding on the diversity order of the linear transceivers in MIMO AF relay channels.
Finally, some simulations results are presented to demonstrate the accuracy of the analysis.

\textbf{Notations}: Throughout this paper, normal letters represent scalar quantities,
boldface letters indicate vectors and boldface uppercase letters designate matrices.
$\Iv_N$ is an $N \times N$ identity matrix
We use $\mathbb{C}$ and $\mathbb{S}_+^M$ to denote a set of complex numbers and $M\times M$ positive definite matrices, respectively.
$\preceq$ or $\succeq$ represent generalized inequality defined on the positive definite cone.
In addition, $E[\cdot]$, $(\cdot)^{H}$, $(\cdot)^+$, $\lceil\cdot\rceil$, and $\lfloor\cdot\rfloor$ stand for expectation,
conjugate transpose, $\max(\cdot,0)$, rounding up, and down operations, respectively.
$[\Av]_{k,k}$ and $\textrm{Tr}\left(\Av\right)$ denote the $k$-th diagonal element and trace function of a matrix $\Av$.
The $k$-th element of a vector $\av$ is denoted by $a_k$.
We denote $f(\rho)\doteq g(\rho)$, when two functions $f(\rho)$ and $g(\rho)$ are exponentially equal as
$\lim_{\rho\rightarrow\infty}\frac{\log f(\rho)}{\log \rho}=\lim_{\rho\rightarrow\infty}\frac{\log g(\rho)}{\log \rho}$.
Inequalities $\dot{\leq}$ and $\dot{\geq}$ are similarly defined.

\section{System Model}\label{sec:System Model}

\begin{figure}
\begin{center}
\includegraphics[width=3.4in]{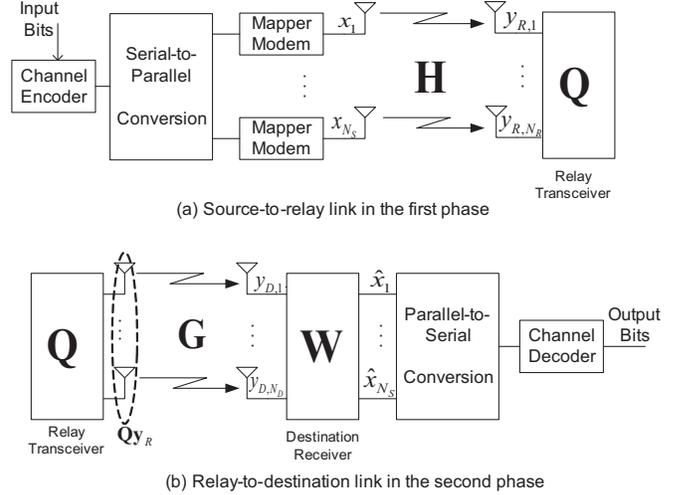}
\end{center}
\caption{Joint encoding/decoding structure for MIMO AF relaying channels with linear transceivers \label{figure:SystemModel2.eps}}
\end{figure}

In this paper, we consider quasi-static flat fading MIMO AF relaying channels equipped with
$N_S$, $N_R$, and $N_D$ number of antennas at the source, the relay, and the destination, respectively.
A direct link between the source and the destination is ignored due to large pathloss.
We assume the half-duplex relay, which means that
each data transmission occurs in two separate phases (time or frequency).
We assume that no channel state information (CSI) is available at the source,
but the global CSI, i.e., perfect knowledge of both $\Hv$ and $\Gv$ is allowed at the destination.
The relay can be informed of either the global CSI or no CSI.

In the first phase, the output of the underlying MIMO channel between the source and the relay can be expressed as
$\yv_{R}=\Hv\xv+\nv_{R}$,
where $\xv\in\mathbb{C}^{N_S\times1}$, $\Hv\in\mathbb{C}^{N_R\times N_S}$ and $\nv_{R}\in\mathbb{C}^{N_R\times 1}$
represent the input signal vector, the channel matrix between the source and the relay, and the noise vector at the relay, respectively.
Denoting the total transmit power at the source by $P_S$, we suppose that each source antenna uses equal power
$\rho\triangleq E[| x_k|^2]=P_S/N_S$ for all $k$ because of no CSI at the source.

In the second phase, the relay signal $\yv_R$ is amplified
by the relay matrix $\Qv\in\mathbb{C}^{N_R\times N_R}$ and transmitted to the destination.
Then, the standard baseband signal at the destination is written by
\bea\label{eq:yD}
\yv_{D}=\Gv\Qv\yv_R+\nv_D=\Gv\Qv\Hv\xv+\Gv\Qv\nv_R+\nv_{D},
\eea
where $\nv_{D}$ designates the noise vector at the destination.
Note that the relay matrix $\Qv$ must
satisfy the relay power constraint $P_R$ as $E[\Vert\Qv\yv_R\Vert^2]\leq P_R$.
Finally, when a linear equalizer $\Wv\in\mathbb{C}^{N_S\times N_D}$ is employed at the destination,
the estimated signal waveform $\hat{\xv}\in\mathbb{C}^{N_S\times 1}$ is expressed as $\hat{\xv}=\Wv\yv_D$.

As the equalizer $\Wv$ decouples the received signal into $N_S$ parallel data streams,
the transmit signals at the source can be encoded either jointly or separately.
To be specific, the joint encoding indicates the case in which
a single channel encoder supports all the data streams at the source so that coding is applied jointly across antennas as illustrated in Figure \ref{figure:SystemModel2.eps}.
Hence, this coding scheme is advantageous to attain the diversity gain of MIMO channels.
In contrast, the separate encoding drives $N_S$ data streams independently using $N_S$ encoders, the outputs of which are fed to
$N_S$ independent decoders; thus, is based sorely on the spatial multiplexing.

In this work, we make a standard assumption that all entries of channel matrices $\Hv$ and $\Gv$ are independent and identically distributed (i.i.d.)
$\sim\mathcal{CN}(0,1)$ and remain constant during the transmission of a codeword.
All elements of the noise vectors $\nv_R$ and $\nv_{D}$ are also assumed to be i.i.d. $\sim\mathcal{CN}(0,1)$.
Finally, we define the following eigenvalue decompositions
$\Hv^H\Hv=\Uv_h\mathbf{\Lambda}_{h}\Uv_h^H$ and $\Gv^H\Gv=\Uv_g\mathbf{\Lambda}_{g}\Uv_g^H$,
where $\Uv_h\in\mathbb{C}^{N_S\times N_S}$ and $\Uv_g\in\mathbb{C}^{N_R\times N_R}$ are unitary matrices,
and $\mathbf{\Lambda}_{h}\in\mathbb{C}^{N_S\times N_S}$ and $\mathbf{\Lambda}_{g}\in\mathbb{C}^{N_R\times N_R}$
represent square diagonal matrices with eigenvalues $\lambda_{h,i}$ for $i=1,\ldots,N_S$ and $\lambda_{g,j}$
for $j=1,\ldots,N_R$, respectively. All eigenvalues are arranged in descending order.

\section{Linear Transceivers}\label{sec:Optimal Transceiver Design}

We would like to mention that the optimal MMSE transceiver between the relay and the destination was first developed in \cite{Guan:08}.
However, it is known that the approach in \cite{Guan:08} which is based on the singular-value decomposition (SVD) is cumbersome to
deal with due to the complicated structure of a compound
channel matrix and colored noise at the destination. In this
section, we introduce an alternative design method utilizing
the ECD property, which makes the analysis more tractable.
Our approach extends the previous result in \cite{Changick:09TWC}
and provides a ECD framework which brings the ZF and MMSE strategies together.

\subsection{MMSE Transceiver}\label{sec:MMSE Transceiver}

We start by defining the error vector $\ev\triangleq\hat{\xv}-\xv$ and its covariance matrix $\Rv_e\triangleq E[\ev\ev^H]$.
Then, the joint MMSE optimization problem for $\Qv$ and $\Wv$ is written by
\bea\label{eq:optimization problem}
\min_{\mathbf{Q,W}}\text{Tr}\left(\Rv_e\right)~~s.t.~~\text{Tr}\left(\Qv(\rho\Hv\Hv^H+\Iv_{N_R})\Qv^H\right)\leq P_R.
\eea
The problem is unconstrained and convex with respect to $\Wv$, and thus
the solution for $\Wv$ is easily obtained as the Wiener filter, i.e.,
\bea\label{eq:MMSE receiver}
\hat{\Wv}_{\text{WF}}=\rho\Hv^H\Qv^H\Gv^H(\rho\Gv\Qv\Hv\Hv^H\Qv^H\Gv^H+\Rv_n)^{-1},
\eea
where $\Rv_n\triangleq\Gv\Qv\Qv^H\Gv^H+\Iv_{N_D}$ designates the covariance matrix of the effective noise $\nv\triangleq\Gv\Qv\nv_R+\nv_D$.
Therefore, a principal issue of the problem (\ref{eq:optimization problem}) is to find $\Qv$.
The following lemma shows that with the MMSE strategy, the optimal relay matrix takes a particular structure.
\begin{Lemma}\label{Lemma:Lemma1}
{\it
The optimal relay matrix $\Qv$ of the problem (\ref{eq:optimization problem})
is generally expressed as a product of two matrices
\bea\label{eq:lemma1}
\hat{\Qv}=\Bv\Lv,
\eea
where $\Bv\in\mathbb{C}^{N_R\times N_S}$ is an unknown matrix as of yet, but
$\Lv\in\mathbb{C}^{N_S\times N_R}$ is a matrix which is given by $\Lv=\Pv\Hv^H$ with $\Pv\in\mathbb{C}^{N_S\times N_S}$ being an arbitrary square invertible matrix.}
\end{Lemma}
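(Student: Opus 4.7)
The plan is to first eliminate $\Wv$ by substituting the Wiener filter (\ref{eq:MMSE receiver}) into the objective, and then to argue via an orthogonal projection that at the optimum every row of $\Qv$ lies in the column space of $\Hv$. Substitution together with the standard MMSE identity yields
\begin{equation*}
\Rv_e = \left(\rho^{-1}\Iv_{N_S} + \Hv^H\Qv^H\Gv^H\Rv_n^{-1}\Gv\Qv\Hv\right)^{-1},
\end{equation*}
with $\Rv_n=\Gv\Qv\Qv^H\Gv^H+\Iv_{N_D}$, leaving $\Qv$ as the only optimization variable subject to the trace power constraint in (\ref{eq:optimization problem}).

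Next I would exploit the intuition that any component of $\Qv$ orthogonal to the column space of $\Hv$ transmits no signal yet consumes power and amplifies noise. For the generic case $N_R\geq N_S$ (so $\Hv$ has full column rank almost surely), let $\Pi_H=\Hv(\Hv^H\Hv)^{-1}\Hv^H$ be the orthogonal projector onto $\mathrm{col}(\Hv)$ and set $\tilde{\Qv}=\Qv\Pi_H$. Three direct calculations deliver: (i) $\tilde{\Qv}\Hv=\Qv\Hv$, so the effective source-to-destination channel is unchanged; (ii) $\tilde{\Qv}\tilde{\Qv}^H=\Qv\Pi_H\Qv^H\preceq\Qv\Qv^H$, so the relay-amplified noise covariance shrinks; and (iii) $\text{Tr}(\tilde{\Qv}(\rho\Hv\Hv^H+\Iv_{N_R})\tilde{\Qv}^H)$ is no larger than the corresponding quantity for $\Qv$. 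Combining (i) and (ii) with operator monotonicity of matrix inversion forces $\tilde{\Rv}_e\preceq\Rv_e$, hence $\text{Tr}(\tilde{\Rv}_e)\leq\text{Tr}(\Rv_e)$. Therefore an optimum can always be taken to satisfy $\Qv=\Qv\Pi_H=(\Qv\Hv)(\Hv^H\Hv)^{-1}\Hv^H$, and letting $\Bv_0\triangleq\Qv\Hv$ yields, for any invertible $\Pv$,
\begin{equation*}
\Qv=\left[\Bv_0(\Hv^H\Hv)^{-1}\Pv^{-1}\right](\Pv\Hv^H)\triangleq\Bv\Lv,
\end{equation*}
which is exactly (\ref{eq:lemma1}). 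The complementary case $N_R<N_S$ is immediate: there $\Hv^H$ has full column rank almost surely, so $\Pv\Hv^H$ admits a left inverse and \emph{every} $\Qv$ can be rewritten in the stated product form.

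The hard part will be cleanly justifying the monotonicity step that turns (i)--(ii) into a decrease of $\text{Tr}(\Rv_e)$. It requires chaining three standard positive semi-definite facts in order: $\tilde{\Rv}_n\preceq\Rv_n$ implies $\tilde{\Rv}_n^{-1}\succeq\Rv_n^{-1}$; conjugation by the unchanged matrix $\Gv\Qv\Hv$ preserves the order, so the ``signal'' term inside $\Rv_e^{-1}$ is non-decreasing; and a final inversion then delivers $\tilde{\Rv}_e\preceq\Rv_e$. A second subtle point worth emphasizing is that the matrix $\Pv$ in the statement is \emph{arbitrary} invertible rather than unique: the projection argument by itself produces the particular choice $\Pv^{\star}=(\Hv^H\Hv)^{-1}$, but any other invertible $\Pv$ can be absorbed into the free factor $\Bv$ without loss. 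It is precisely this flexibility that later allows a single ECD template to subsume both the ZF and MMSE strategies.
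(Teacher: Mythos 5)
Your proof is correct and follows essentially the same route as the paper: you project $\Qv$ onto the column space of $\Hv$ (the paper writes this as the decomposition $\Qv=\Qv_{\|}+\Qv_{\bot}$) and show the orthogonal component only wastes relay power and amplifies noise, then absorb an arbitrary invertible $\Pv$ into the free factor $\Bv$. The only cosmetic differences are that you eliminate $\Wv$ via the Wiener filter before arguing monotonicity, whereas the paper argues $\Rv_e(\Qv_{\|})\preceq\Rv_e(\Qv)$ for a general fixed $\Wv$, and that you explicitly dispose of the $N_R<N_S$ case, which the paper leaves implicit.
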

\begin{proof}
See Appendix \ref{Appendix:Proof of Lemma1}.
\end{proof}

Let us set $\Pv=(\Hv^H\Hv+\rho^{-1}\Iv_{N_S})^{-1}$ so that
$\Lv$ forms the MMSE receiver for the first-hop channel $\Hv$ with the input signal $\xv$.
Now, we define $\yv\triangleq\Lv\yv_R\in\mathbb{C}^{N_S\times1}$ as the output signal of the relay receiver $\Lv$,
and its covariance matrix $\Rv_y\triangleq E[\yv\yv^H]\in\mathbb{C}^{N_S\times N_S}$ as
\bea\label{eq:Ry}
\Rv_y=\Lv(\rho\Hv\Hv^H+\Iv_{N_R})\Lv^H.
\eea
Then, the estimated signal vector $\hat{\xv}$ and the relay power constraint in (\ref{eq:optimization problem}) are respectively rephrased as
\bea\label{eq:relay power constraint}
\hat{\xv}=\Wv(\Gv\Bv\yv+\nv_D)~~\text{and}~~\text{Tr}(\Bv\Rv_y\Bv^H)\leq P_R.
\eea
Since the rank of $\Rv_y$ equals $M\triangleq\min(N_S,N_R)$, $\Rv_y$ becomes clearly non-invertible when $N_S>N_R$.
This fact makes the problem more challenging, but has not been fully addressed in conventional literature.
In the following, we revisit the previous works in \cite{Guan:08} and \cite{Changick:09TWC}, and
provide a more generalized and insightful design strategy without restriction on the number of antennas at the source.

In fact, when the relay receiver $\Lv$ forms the MMSE receiver for the first hop channel,
one can show that the error covariance matrix $\Rv_e$ in (\ref{eq:optimization problem}) is expressed as a sum of
two individual covariance matrices, each of which represents the first hop and the second hop MIMO channels, respectively.
This result has been proved in \cite{Changick:09TWC}, but the proof was limited to the cases of $N_S\leq\min(N_R,N_D)$.
For the sake of completeness, we give a new result of error decomposition that can be applied to any kind of antenna configurations.

\begin{Lemma}\label{Lemma:Lemma2}
{\it Define the eigenvalue decomposition
$\Rv_y=\Uv_h\mathbf{\Lambda}_{y}\Uv_h^H$ where $\mathbf{\Lambda}_{y}\in\mathbb{C}^{N_S\times N_S}$ represents a square diagonal
matrix with eigenvalues $\lambda_{y,k}$ for $k=1,\ldots,N_S$ arranged in descending order.
Then, without loss of MMSE optimality, we have
\bea\label{eq:lemma2}
\Rv_e&\!\!=\!\!&(\Hv^H\Hv+\rho^{-1}\Iv_{N_S})^{-1}\nonumber\\
&&+~~\widetilde{\Uv}_h\big(\widetilde{\Uv}_h^H\Bv^H\Gv^H\Gv\Bv\widetilde{\Uv}_h+\widetilde{\mathbf{\Lambda}}_y^{-1}\big)^{-1}\widetilde{\Uv}_h^H,
\eea
where $\widetilde{\Uv}_h\in\mathbb{C}^{N_S\times M}$ is a matrix constructed by the first $M$ columns of $\Uv_h$ and
$\widetilde{\mathbf{\Lambda}}_y$ indicates
the $M\times M$ upper-left submatrix of $\mathbf{\Lambda}_{y}$.}
\end{Lemma}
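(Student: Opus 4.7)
The plan is to establish the decomposition in Lemma~\ref{Lemma:Lemma2} through a two-stage MMSE argument, adapted to accommodate the rank-deficient case that arises when $N_S>N_R$. With the particular choice $\Pv=(\Hv^H\Hv+\rho^{-1}\Iv_{N_S})^{-1}$, the signal $\yv=\Lv\yv_R$ is precisely the linear MMSE estimate of $\xv$ from $\yv_R$ in the first hop. Its estimation error $\xv-\yv$ therefore has covariance $(\Hv^H\Hv+\rho^{-1}\Iv_{N_S})^{-1}$ and, by the orthogonality principle, is uncorrelated with every linear function of $\yv_R$; since $\yv_D=\Gv\Bv\yv+\nv_D$ and $\nv_D$ is independent of $\yv_R$, it is also uncorrelated with the linear MMSE estimate of $\yv$ based on $\yv_D$.

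Writing $\hat{\xv}-\xv=(\hat{\yv}-\yv)-(\xv-\yv)$ and invoking this orthogonality yields the covariance split
\[
\Rv_e=(\Hv^H\Hv+\rho^{-1}\Iv_{N_S})^{-1}+\mathrm{Cov}(\hat{\yv}-\yv),
\]
where $\hat{\yv}$ denotes the linear MMSE estimate of $\yv$ from $\yv_D$. It therefore remains to evaluate this second-hop error covariance for the Gaussian model $\yv_D=\Gv\Bv\yv+\nv_D$.

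The main obstacle is that $\Rv_y$ has rank only $M=\min(N_S,N_R)$, so the textbook MMSE expression $(\Rv_y^{-1}+\Bv^H\Gv^H\Gv\Bv)^{-1}$ is ill-defined whenever $N_S>N_R$; this is exactly the configuration that the argument in \cite{Changick:09TWC} could not treat. To circumvent it, I would observe that $\yv=\Pv\Hv^H\yv_R$ always lies in the row space of $\Hv$, which coincides with the span of the first $M$ columns of $\Uv_h$; hence $\yv=\widetilde{\Uv}_h\ty$ with $\ty\triangleq\widetilde{\Uv}_h^H\yv\in\mathbb{C}^{M\times 1}$. A short computation using $\Pv=\Uv_h(\mathbf{\Lambda}_h+\rho^{-1}\Iv_{N_S})^{-1}\Uv_h^H$ verifies that the covariance of $\ty$ is the full-rank diagonal matrix $\widetilde{\mathbf{\Lambda}}_y$, so the reduced signal model $\yv_D=\Gv\Bv\widetilde{\Uv}_h\ty+\nv_D$ is a standard, non-degenerate Gaussian estimation problem.

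Applying the classical MMSE error formula to this reduced problem gives
\[
\mathrm{Cov}(\hat{\ty}-\ty)=\bigl(\widetilde{\mathbf{\Lambda}}_y^{-1}+\widetilde{\Uv}_h^H\Bv^H\Gv^H\Gv\Bv\widetilde{\Uv}_h\bigr)^{-1},
\]
and the identity $\hat{\yv}-\yv=\widetilde{\Uv}_h(\hat{\ty}-\ty)$ lifts this back to the original $N_S$-dimensional space, producing the second term of \eqref{eq:lemma2}. Summing with the first-hop contribution yields the claimed decomposition; since no restriction on $(N_S,N_R,N_D)$ was invoked, the result applies to arbitrary antenna configurations.
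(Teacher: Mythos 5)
Your proposal is correct and follows essentially the same route as the paper: both hinge on the orthogonality principle for the first-hop Wiener estimate $\yv=\Lv\yv_R$ to split $\Rv_e$ into a first-hop term $(\Hv^H\Hv+\rho^{-1}\Iv_{N_S})^{-1}$ plus the error covariance of estimating $\yv$ from $\yv_D=\Gv\Bv\yv+\nv_D$, and both handle the rank deficiency of $\Rv_y$ by restricting attention to its $M$-dimensional range spanned by $\widetilde{\Uv}_h$. The only (immaterial) difference is that you reduce to the full-rank variable $\ty=\widetilde{\Uv}_h^H\yv$ first and then invoke the information-form MMSE formula, whereas the paper writes the Wiener MSE in covariance form, sets the component of $\Bv$ on the null space of $\Rv_y$ to zero, and converts via a matrix-inversion identity.
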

\begin{IEEEproof}
As the relay receiver $\Lv$ follows the receive Wiener filter structure,
its output signal $\yv$ must satisfy the orthogonality principle \cite{Paulraj:03}, i.e., $E\left[\big(\yv-\xv\big)\yv^H\right]=\mathbf{0}$.
Meanwhile, using $\yv$, the MSE can be expressed as
$E\left[\|\ev\|^2\right]=E\left[\left\|\hat{\xv}-\yv+\yv-\xv\right\|^2\right]$.
Then, due to the orthogonality principle above, it is true that
the signal $\yv-\xv$ becomes orthogonal to $\hat{\xv}$ as well as $\yv$, since $\hat{\xv}=\Wv\yv_D=\Wv(\Gv\Bv\yv+\nv_D)$
is also a function of $\yv$ and independent noise $\nv_D$.
Therefore, we have
\bea\label{eq:two MSE}
E\left[\|\ev\|^2\right]=\text{MSE}_{H}+\text{MSE}_{G},
\eea
where $\text{MSE}_{H}\triangleq E\left[\left\|\yv-\xv\right\|^2\right]$ and $\text{MSE}_{G}\triangleq E\left[\left\|\Wv\yv_D-\yv\right\|^2\right]$.

In what follows, we will show that $\text{MSE}_H$ and $\text{MSE}_G$ in (\ref{eq:two MSE}) can be expressed
as the first and second term in (\ref{eq:lemma2}), respectively.
Let us first take a look at $\text{MSE}_G$. Then, it follows
\bea
\text{MSE}_\text{G}=E[\text{Tr}\left((\Wv\yv_D-\yv)(\Wv\yv_D-\yv)^H\right)]~~~~~~~~~~~~~~~~~~~~~\nonumber\\
=\text{Tr}\big(\Rv_y\!\!-\!\!\Rv_y\Bv^H\Gv^H(\Gv\Bv\Rv_y\Bv^H\Gv^H\!\!\!+\!\Iv_{N_D})^{-1}\Gv^H\Bv^H\Rv_y\big).~~\nonumber
\eea
Let us now expand the matrix $\Bv$ to a more general form as $\Bv=\breve{\Bv}\Uv_{h}^H$ where $\breve{\Bv}=[\Bv_1~\Bv_2]$
with $\Bv_1\in\mathbb{C}^{N_R\times M}$ and $\Bv_2\in\mathbb{C}^{N_R\times (N_S-M)}$.
Since $\Rv_y$ is a rank-$M$ matrix,
setting $\Bv_2=\mathbf{0}$ has no impact on both the MSE and the relay power consumption in (\ref{eq:relay power constraint}).
Therefore, without loss of generality, $\text{MSE}_\text{G}$ in (\ref{eq:two MSE}) is further rephrased as
\bea
\text{MSE}_\text{G}&=&\text{Tr}\big(\widetilde{\Uv}_h(\widetilde{\mathbf{\Lambda}}_y-\widetilde{\mathbf{\Lambda}}_y^H\Bv_1^H\Gv^H\nonumber\\
&&\times(\Gv\Bv_1\widetilde{\mathbf{\Lambda}}_y
\Bv_1^H\Gv^H+\Iv_{N_D})^{-1}\Gv^H\Bv_1^H\widetilde{\mathbf{\Lambda}}_y)\widetilde{\Uv}_h^H\big)\nonumber\\
&=&\widetilde{\Uv}_h\big(\Bv_1^H\Gv^H\Gv\Bv_1\widetilde{\Uv}_{h}+\widetilde{\mathbf{\Lambda}}_y^{-1}\big)^{-1}\widetilde{\Uv}_{h}^H~~~~~\nonumber\\
&=&\widetilde{\Uv}_h\big(\widetilde{\Uv}_{y}^H\Bv^H\Gv^H\Gv\Bv\widetilde{\Uv}_{h}+\widetilde{\mathbf{\Lambda}}_y^{-1}\big)^{-1}\widetilde{\Uv}_{h}^H,~\nonumber
\eea
where the last equality follows from $\Bv_1=\Bv\widetilde{\Uv}_h$.
Meanwhile, $\text{MSE}_H$ is equivalent to one in the conventional P2P MMSE systems;
thus, the proof simply follows from the previous results in \cite{Joham:05} and the proof is completed.
\end{IEEEproof}

The result of Lemma \ref{Lemma:Lemma2} reveals that we need to optimize only the second MSE term with respect to $\Bv$
because the first term of $\Rv_e$ consists of known parameters.
The standard theory of MMSE filter designs \cite{Guan:08} \cite{Palomar:03} shows that
in this case, the optimal $\Bv$ can be written in general by
$\hat{\Bv}=\widetilde{\Uv}_g\mathbf{\Phi}\widetilde{\Uv}_h^H$ where $\widetilde{\Uv}_g\in\mathbb{C}^{N_R\times M}$ denotes a matrix constructed by the first $M$ columns
of $\Uv_g$ and $\mathbf{\Phi}\in\mathbb{C}^{M\times M}$ is an arbitrary matrix.
Finally, substituting $\hat{\Bv}$ into (\ref{eq:lemma2}), the modified problem determines the optimal $\mathbf{\Phi}$:
\bea\label{eq:modified problem}
\hat{\mathbf{\Phi}}&=&\arg\min_{\mathbf{\Phi}}~\big(\mathbf{\Phi}\widetilde{\mathbf{\Lambda}}_g\mathbf{\Phi}^H+\widetilde{\mathbf{\Lambda}}_y^{-1}\big)^{-1}\nonumber\\
&&\textit{s.t.}~\text{Tr}(\mathbf{\Phi}\widetilde{\mathbf{\Lambda}}_y\mathbf{\Phi}^H)\leq P_R
\eea
with $\widetilde{\mathbf{\Lambda}}_g$ representing
the $M\times M$ upper-left submatrix of $\mathbf{\Lambda}_g$.
It is known that for $\Av$ and $\Bv\in\mathbb{S}_+^{M}$, we have
$\text{Tr}(\Av^{-1})\geq \sum_{i=1}^{M}\left([\Av]_{k,k}\right)^{-1}$ and
$\text{Tr}(\Av\Bv)\geq\sum_{i=1}^M\lambda_i(\Av)\lambda_{M-i+1}(\Bv)$ \cite{Komaroff:90}.
From these results, it is immediate that the minimum MSE is achieved when $\mathbf{\Phi}$ is a diagonal matrix and
the resulting problem simply becomes convex; thus, can be easily solved
by Karush-Kuhn-Tucker conditions \cite{Boyd:04}.
In combination with the relay receiver $\Lv$ in (\ref{eq:lemma1}), we finally have
\bea\label{eq:optimal relay matrix}
\hat{\Qv}=\hat{\Bv}\Lv=\widetilde{\Uv}_g\hat{\mathbf{\Phi}}\widetilde{\Uv}_h^H\Lv,
\eea
where the $k$-th diagonal element of $\hat{\mathbf{\Phi}}$ denoted by $\hat{\phi}_k$ is given by
$|\hat{\phi}_k|^2=(\lambda_{y,k}\lambda_{g,k})^{-1}\left(\nu(\lambda_{y,k}\lambda_{g,k})^{1/2}-1\right)^+$ for $k=1,2,\ldots,M$
with $\nu$ being chosen to satisfy the relay power constraint in (\ref{eq:relay power constraint}).
Note that if $\lambda_{g,k}=0$, we have $|\hat{\phi_k}|^2=0$.

\subsection{ZF Transceiver}\label{sec:ZF Transceiver}

As far as the CSI is allowed at the relay, it is also possible to improve the performance of ZF systems through the transceiver optimization process.
In fact, the optimal ZF transceiver may be similarly obtained using the SVD method in \cite{Guan:08}.
To the best of our knowledge, however, the solution for the ZF transceiver has not been presented explicitly so far.
Besides, the SVD approach may lead to an intractable solution.
In this section, we briefly show that the ZF transceiver can be obtained in our ECD framework
and present an explicit solution for the subsequent analysis.

The ZF problem arises from the constraint that $\hat{\xv}$ is an interference-free estimation of $\xv$. Thus, the
optimization problem can be formulated as \cite{Joham:05}
\bea\label{eq:ZF problem}
\min_{\mathbf{Q,W}}\text{Tr}\left(\Rv_e\right)~~s.t.~~\text{Tr}\left(\Qv(\rho\Hv\Hv^H+\Iv_{N_R})\Qv^H\right)\leq P_R\\
\label{eq:ZF constraint}
\Wv\Rv_n^{-1/2}\Gv\Qv\Hv=\Iv.~~~~~~~~~~~~~~~
\eea
We notice that the ZF problem is only defined when $N_S\leq\min(N_R,N_D)$ due to the ZF constraint (\ref{eq:ZF constraint}).
Then, the solution for the destination receiver is simply given by
\bea
\hat{\Wv}_{\text{ZF}}=(\Hv^H\Qv^H\Gv^H\Rv_n^{-1}\Gv\Qv\Hv)^{-1}\Hv^H\Qv^H\Gv^H\Rv_n^{-\frac{1}{2}}.\nonumber
\eea
Once $\hat{\Wv}_{\text{ZF}}$ is given, the constraint (\ref{eq:ZF constraint}) can be removed, which means that the remaining problem
for $\Qv$ amounts to the standard MMSE problem (\ref{eq:optimization problem}).
It is thus clear from Lemma \ref{Lemma:Lemma1} that by setting $\Pv=(\Hv^H\Hv)^{-1}$, the optimal relay matrix can be expressed as $\hat{\Qv}=\Bv\Lv_{z}$ where
$\Lv_{z}=(\Hv^H\Hv)^{-1}\Hv^H$ represents the ZF receiver for the first-hop channel $\Hv$, while $\Bv$ is an unknown matrix yet.

Now, let us apply the results of $\hat{\Wv}_{\text{ZF}}$ and $\hat{\Qv}$ to the problem (\ref{eq:ZF problem}). Then, the error covariance matrix is
\bea\label{eq:ZF error covariance}
\Rv_e&\!\!\!=\!\!\!&\Big(\Hv^H\hat{\Qv}^H\Gv^H(\Gv\hat{\Qv}\hat{\Qv}^H\Gv^H+\Iv_{N_D})^{-1}\Gv\hat{\Qv}\Hv\Big)^{-1}\nonumber\\
&\!\!\!=\!\!\!&\Big(\Bv^H\Gv^H\big(\Gv\Bv(\Hv^H\Hv)^{-1}\Bv^H\Gv^H+\Iv_{N_D}\big)^{-1}\Gv\Bv\Big)^{-1}\nonumber\\
&\!\!\!=\!\!\!&\Big(\Bv^H\Gv^H\big(\Iv_{N_D}-\Gv\Bv(\Hv^H\Hv+\Bv^H\Gv^H\Gv\Bv)^{-1}\nonumber\\
&&\times~\Bv^H\Gv^H\big)^{-1}\Gv\Bv\Big)^{-1}\nonumber\\
&\!\!\!=\!\!\!&\Big(\Bv^H\Gv^H\Gv\Bv-\Bv^H\Gv^H\Gv\Bv\nonumber\\
&&\times~\big(\Hv^H\Hv+\Bv^H\Gv^H\Gv\Bv\big)^{-1}\Bv^H\Gv^H\Gv\Bv\Big)^{-1}\nonumber\\
&\!\!\!=\!\!\!&\Big(\Hv^H\Hv\Big)^{-1}+\Big(\Bv^H\Gv^H\Gv\Bv\Big)^{-1}.
\eea
Similarly, the relay power constraint is rewritten as $\text{Tr}(\Bv\Rv_z\Bv^H)\leq P_R$ where
\bea\label{eq:Rz}
\Rv_z&\triangleq&\Lv_{z}(\rho\Hv\Hv^H+\Iv_{N_R})\Lv_{z}^H
\eea
indicates the covariance matrix of the relay signal $\zv=\Lv_z\yv_R$.
The results in (\ref{eq:ZF error covariance}) and (\ref{eq:Rz}) imply that
the error covariance decomposition method holds for the ZF systems as well.

From the equation (\ref{eq:ZF error covariance}), we obtain the modified problem to find $\Bv$:
\bea\label{eq:ZF B}
~~\min_{\mathbf{B}}\text{Tr}\Big(\big(\Bv^H\Gv^H\Gv\Bv\big)^{-1}\Big)~~s.t.~~\text{Tr}(\Bv\Rv_z\Bv^H)\leq P_R.\nonumber
\eea
Similar to (\ref{eq:modified problem}), we set $\hat{\Bv}=\widetilde{\Uv}_g\mathbf{\Phi}\Uv_h^H$
with $\mathbf{\Phi}_z\in\mathbb{C}^{N_S\times N_S}$ being a square diagonal matrix.
Then, the remaining steps simply follow the previous work in Section \ref{sec:MMSE Transceiver}.
Finally, we obtain the optimal relay matrix as
$\hat{\Qv}=\widetilde{\Uv}_g\hat{\mathbf{\Phi}}\Uv_h\Lv_z$
where the $k$-th diagonal element of $\hat{\mathbf{\Phi}}$ denoted by $\hat{\phi}_{k}$ is given by
$|\hat{\phi}_{k}|^2=\sqrt{\frac{\mu}{\lambda_{z,k}\lambda_{g,k}}}$
for $k=1,2,\ldots,N_S$, $\lambda_{z,1}>\cdots>\lambda_{z,N_S}$ designate the eigenvalues of $\Rv_z$, and
$\mu$ is chosen to satisfy the relay power constraint.
Note that if $\lambda_{g,k}=0$, we have $|\hat{\phi_{k}}|^2=0$.

\subsection{Naive Schemes}
Meanwhile, when no CSI is available at the relay, a sensible transmission strategy is isotropic \cite{DGunduz:10} \cite{SYang:07}, i.e.,
$\hat{\Qv}=\delta\Iv_{N_R}$, which is called {\it``Naive-MMSE''} or {\it``Naive-ZF''} depending on the equalizer used at the destination.
The relay may use a scalar gain $\delta$ such that $\delta\leq\sqrt{\frac{P_R}{E[\yv_R\yv_R^H]}}$ to
remain within the power constraint. However, this variable gain requires estimation of the source-to-relay channel.
Alternatively, we can exploit a fixed gain relay which amplifies the received signal with a constant factor $c$, i.e., $\delta=c$ \cite{SLoyka:11}.
As will be shown later, both cases exhibit the same diversity behavior.

\section{Diversity-Multiplexing Tradeoff Analysis}\label{sec:Diversity Analysis}

The DMT analysis provides a compact characterization of the tradeoff between the data rate and block-error probability over the
MIMO quasi-static fading channels.
For this reason, the DMT has been widely exploited as a convenient tool for comparing various relaying systems with different protocols.
In this section, we aim to examine the DMT performance of the linear ZF and MMSE transceivers.
Throughout the analysis, we say that a system achieves multiplexing gain $r$ and corresponding diversity gain $d(r)$ if
\bea
\lim_{\rho\rightarrow\infty}\frac{R(\rho)}{\log\rho}\doteq r, ~~\text{and}~~\lim_{\rho\rightarrow\infty}\frac{P_{\text{out}}(\rho)}{\log\rho}\doteq d(r),\nonumber
\eea
where $R(\rho)$ denotes a certain target data rate that varies depending on the input SNR $\rho$ and
$P_{\text{out}}$ indicates the outage probability.
Note that if the rate $R(\rho)$ is a fixed constant regardless of the SNR, the multiplexing gain converges to zero.
We consider infinite length codewords so that
the error event is dominated by the outage event of mutual information (MI)\footnote{The practical finite-length code design
whose diversity order approaches the outage exponent will be discussed in our future works}.
In addition, we assume that $P_R=P_T=\rho N_t$ for simplicity, but the result can be easily extended to more general cases.
We use $N_S\times N_R \times N_D$ to denote a relaying system with $N_S$-source, $N_R$-relay, and $N_D$-destination antennas.

The optimal DMT is the best possible error probability exponent $d^*(r)$ achievable over a channel by any space-time codes at multiplexing gain $r$.
Before we proceed our analysis, we first need to establish the optimal DMT in MIMO AF half-duplex relaying channels.
Assuming the global CSI at the relay and the optimal ML receiver at the destination, the maximum end-to-end MI is expressed as \cite{Tang:07}
\bea\label{eq:max MI}
\mathcal{I}^*=\max_{\mathbf{Q}\in\mathbb{C}^{N_R\times N_R}}\frac{1}{2}\log\left|\rho\Hv^H\Qv^H\Gv^H\Rv_n^{-1}\Gv\Qv\Hv+\Iv_{N_S}\right|,
\eea
with the relay matrix $\Qv$ being subject to the power constraint in (\ref{eq:optimization problem}).
The pre-log factor $1/2$ is attributed to the half-duplex relay.
With the MI given above, we are ready to show that
\bea\label{eq:optimal DMT}
d^*(r)=(N_R-2r)(\min(N_S,N_D)-2r),
\eea
for $0< r< \frac{N}{2}$ with $N\triangleq\min(N_S,N_R,N_D)$.
The converse proof is immediate from the cut-set bound \cite{DGunduz:10}, since
the end-to-end DMT is bounded by the DMT of the source-to-relay cut and the relay-to-destination cut, each of which is a P2P MIMO channel.
As the transmission occurs over two time (frequency) phases under the half-duplex constraint, we obtain $d^*(r)$ in (\ref{eq:optimal DMT}) through simple scaling.
The achievability follows by showing that the MI in (\ref{eq:max MI}) achieves the cut-set bound.
Details are given in Appendix \ref{Appendix:Proof of Theorem1}.

While $d^*(r)$ is achievable by the ML decoding at the destination, the following result characterizes the DMT of
MIMO AF relaying channels under the linear transceivers.
With the linear ZF or MMSE equalizer at the destination,
the outage performance is characterized by the following two probabilities.
As for the joint encoding scheme, the outage probability of interest is given by
\bea\label{eq:P_JE}
P^{\text{JE}}_{\text{out}}\triangleq P\left(\frac{1}{2}\sum_{i=1}^{N_S}\log\left(1+\tau_i\right)<R(\rho)\right),
\eea
where $\tau_i$ indicates the output SNR of the $i$-th data stream.
Meanwhile, for the separate encoding scheme, a reasonable strategy without CSI at the source is to allocate the same rate $R/N_S$ to each stream.
Then, the relevant outage probability is given by
\bea\label{eq:P_SE}
P^{\text{SE}}_{\text{out}}\triangleq P\left(\bigcup_{i=1}^{N_S}\left\{\frac{1}{2}\log\left(1+\tau_i\right)<\frac{R(\rho)}{N_S}\right\}\right).
\eea
Using these outage definitions, we characterize the DMT performance of the MMSE transceiver as a closed-form expression in the following theorem.

\begin{Theorem}\label{Theorem:Theorem1}
{\it The DMT of the $N_S\times N_R\times N_D$ MIMO AF half-duplex relaying channels under the MMSE transceiver is given by
\bea
d(r)=\left\{\begin{array}{cc}\!\!\!\!(N_R-N_S+1)\left(1-\frac{2r}{N_S}\right)^+ & \!\!\!\!\!\text{if}~ N_S\leq \min(N_R,N_D) \\
0 & \text{otherwise}\end{array}\right.\nonumber
\eea
\vspace{-15pt}
\bea\label{eq:Theorem1}
~~
\eea
for both the joint and separate encoding schemes with multiplexing gain $r>0$.}
\end{Theorem}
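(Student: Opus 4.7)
\emph{Plan.} The starting point is the decomposition of Lemma \ref{Lemma:Lemma2}, which writes $\Rv_e=(\Hv^H\Hv+\rho^{-1}\Iv_{N_S})^{-1}+\widetilde{\Uv}_h\big(\widetilde{\Uv}_h^H\Bv^H\Gv^H\Gv\Bv\widetilde{\Uv}_h+\widetilde{\mathbf{\Lambda}}_y^{-1}\big)^{-1}\widetilde{\Uv}_h^H$ as a first-hop P2P MMSE term plus a correction capturing the relay-to-destination hop. Because the per-stream MMSE output SINR $\tau_k$ is monotone in $[\Rv_e]_{k,k}$, both outage events (\ref{eq:P_JE}) and (\ref{eq:P_SE}) reduce to events on the diagonal entries of $\Rv_e$, with the worst stream dominating the exponent; this is why JE and SE should yield the same DMT. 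The proof will establish matching upper and lower bounds on the diversity under $N_S\le\min(N_R,N_D)$ and separately handle the rank-deficient case.

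For the converse I would use $\Rv_e\succeq(\Hv^H\Hv+\rho^{-1}\Iv_{N_S})^{-1}$, which holds for any $\Bv$, to deduce that every relay SINR $\tau_k$ is upper bounded by the corresponding P2P MMSE SINR on the first-hop channel $\Hv$. The relay outage probability then dominates that of an $N_S\times N_R$ P2P MMSE system run at effective rate $2R(\rho)$, since the half-duplex pre-log $\tfrac12$ doubles the multiplexing gain target. Invoking the known P2P MMSE DMT $(N_R-N_S+1)(1-r/N_S)^+$ from \cite{Mehana:12} with $r\mapsto 2r$ yields $d(r)\le(N_R-N_S+1)(1-2r/N_S)^+$ for both JE and SE.

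For the achievability I would substitute the explicit $\hat{\Bv}=\widetilde{\Uv}_g\hat{\mathbf{\Phi}}\widetilde{\Uv}_h^H$ of Section \ref{sec:MMSE Transceiver} into (\ref{eq:lemma2}), which diagonalizes the correction into entries $(|\hat\phi_k|^2\lambda_{g,k}+\lambda_{y,k}^{-1})^{-1}$. Using $\lambda_{y,k}\doteq\rho\lambda_{h,k}$ at high SNR and the waterfilling expression for $\hat\phi_k$ (with Lagrange multiplier $\nu$ fixed by $P_R\doteq\rho$), one can check that whenever $\lambda_{h,k}$ is small enough to drive the first-hop term toward outage, the same eigenvalue also forces the correction to be exponentially dominated by it. Casting this as a joint large-deviation analysis on the Zheng-Tse exponents $\lambda_{h,k}\doteq\rho^{-\alpha_k}$, $\lambda_{g,k}\doteq\rho^{-\beta_k}$, I would show that the ``$\Gv$-coupling'' cost is $\rho^{o(1)}$ outside an event of probability $o(\rho^{-(N_R-N_S+1)(1-2r/N_S)^+})$. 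Hence $[\Rv_e]_{k,k}\doteq[(\Hv^H\Hv+\rho^{-1}\Iv_{N_S})^{-1}]_{k,k}$ in the exponential sense and the upper bound is attained.

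Finally, when $N_S>\min(N_R,N_D)$, one of the two terms in (\ref{eq:lemma2}) must be rank-deficient in the $N_S$-dimensional stream space. If $N_S>N_R$ then $(\Hv^H\Hv+\rho^{-1}\Iv_{N_S})^{-1}$ has $N_S-N_R$ eigenvalues of order $\rho$, forcing some $[\Rv_e]_{k,k}=\Theta(\rho)$; if $N_S>N_D$ then $\Bv^H\Gv^H\Gv\Bv$ has rank at most $N_D<M=N_S$, and the correction has $N_S-N_D$ eigenvalues pinned at $\lambda_{y,k}\doteq\rho\lambda_{h,k}$. Either way the worst $\tau_k$ stays $O(1)$, so for any $r>0$ the rate target $2r\log\rho$ is violated with probability tending to $1$, giving $d(r)=0$. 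The principal obstacle is the achievability step: the joint large-deviation analysis must couple the two hops through the waterfilling $\hat\phi_k$ and verify that no non-typical $(\boldsymbol\alpha,\boldsymbol\beta)$ configuration yields a cheaper outage than the pure first-hop one.
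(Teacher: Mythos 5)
Your converse is a genuinely different route from the paper's. The paper lower-bounds $P_{\text{out}}^{\text{JE}}$ directly via Jensen's inequality, an eigenvector-overlap event $\mathcal{A}=\{|u_{k,\bar i}|^2\geq\frac{1-\epsilon}{N_S}\}$ with $P(\mathcal{A})\doteq\rho^0$, the power-constraint bound $|\hat\phi_k|^2\leq N_S\rho\lambda_{y,k}^{-1}$, and an explicit case analysis on the exponents $\alpha_{N_S},\beta_M$ leading to (\ref{eq:alpha NS}); you instead use $\Rv_e\succeq(\Hv^H\Hv+\rho^{-1}\Iv_{N_S})^{-1}$ to reduce to the known P2P MMSE DMT at multiplexing gain $2r$. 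That reduction is sound (the PSD decomposition of Lemma \ref{Lemma:Lemma2} gives the inequality for the optimal transceiver, and monotonicity of $\tau_k$ in $[\Rv_e]_{k,k}$ transfers the outage event), and it is cleaner than the paper's argument, at the price of importing the $r>0$ P2P result from \cite{Mehana:12}\cite{Kumar:09} rather than proving it in situ.

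The gap is in your achievability step. You claim that $[\Rv_e]_{k,k}\doteq[(\Hv^H\Hv+\rho^{-1}\Iv_{N_S})^{-1}]_{k,k}$ outside an event of probability $o\big(\rho^{-(N_R-N_S+1)(1-2r/N_S)^+}\big)$, reasoning that a small $\lambda_{h,k}$ dominates the correction term. This has the logic backwards: the second term of (\ref{eq:lemma2}) becomes large when $\lambda_{g,M}$ is small, \emph{independently} of the first hop, so the second hop is a separate outage source that cannot be absorbed into the first-hop event. Its probability is $F_{\lambda_{g,M}}(\rho^{-(1-2r/N_S)})\propto\rho^{-(N_R-M+1)(N_D-M+1)(1-2r/N_S)}$, which under $N_S\leq\min(N_R,N_D)$ is of \emph{equal} exponential order to the first-hop event when $N_D=N_S$ (not smaller), so your $o(\cdot)$ claim fails there; the correct statement is the paper's $P_{\text{out}}^{\text{SE}}\dot\leq P(\min(\lambda_{h,N_S},\lambda_{g,M})<\rho^{-(1-2r/N_S)})\doteq F_{\lambda_{h,N_S}}+F_{\lambda_{g,M}}$ as in (\ref{eq:mu})--(\ref{eq:dME}), after which one checks that the second exponent is never the smaller one precisely because $N_D\geq N_S$. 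The fix is routine but necessary. Two smaller points: $\lambda_{y,k}\doteq\rho\min(1,\rho\lambda_{h,k})$, not $\rho\lambda_{h,k}$ (equivalently $\rho\lambda_{y,k}^{-1}=1+\rho^{-1}\lambda_{h,k}^{-1}$); and in the degenerate case with joint encoding, one large diagonal entry of $\Rv_e$ does not by itself force $\frac12\sum_i\log(1+\tau_i)<r\log\rho$ --- you need the eigenvector-overlap argument to show that, with probability $\doteq\rho^0$, the $\rho$-order eigendirections load onto enough coordinates that the sum rate stays $O(1)$.
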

\begin{IEEEproof}
As it is clear from (\ref{eq:P_JE}) and (\ref{eq:P_SE}) that $P^{\text{JE}}_{\text{out}}\leq P^{\text{SE}}_{\text{out}}$ for any linear transceiver $\Qv$ and $\Wv$,
we prove the theorem by showing that the upper-bound of $P^{\text{SE}}_{\text{out}}$ yields the same outage exponent as the lower-bound of $P^{\text{JE}}_{\text{out}}$.
Note that with the MMSE strategy, the $k$-th output SNR equals $\tau_k=\rho/[\Rv_e]_{k,k}-1$ \cite{Palomar:03} and
the target rate $R$ is set to be $R(\rho)=r\log\rho$ with $r>0$.

{\textbf{(1) DMT Lower-bound}}:
When the separate encoding is concerned with the MMSE transceiver, the outage probability (\ref{eq:P_SE}) is equivalently
\bea\label{eq:P_SE1}
P^{\text{SE}}_{\text{out}}&\doteq&\max_{i}P\left(\frac{1}{2}\log\left(\frac{\rho}{[\Rv_e]_{i,i}}\right)<\frac{R(\rho)}{N_S}\right).
\eea
Then, applying $\hat{\Bv}$ described in (\ref{eq:optimal relay matrix}) to the error covariance matrix (\ref{eq:lemma2}),
we obtain
\bea\label{eq:P_SE2}
P^{\text{SE}}_{\text{out}}&\doteq&P\left(-\log\left(\min_i(S_{h,i}+S_{g,i})\right)<\frac{2R(\rho)}{N_S}\right),
\eea
where
\bea\label{eq:S_hi}
S_{h,i}&\triangleq&[\Uv_h(\rho\mathbf{\Lambda}_h+\Iv_{N_S})^{-1}\Uv_h^H]_{i,i}\nonumber\\
&=&\uv_i(\rho\mathbf{\Lambda}_h+\Iv_{N_S})^{-1}\uv_i^H\nonumber\\
&=&\sum_{k=1}^{N_S}\frac{|u_{k,i}|^2}{1+\rho\lambda_{h,k}}
\eea
with $\uv_i$ being the $i$-th row of the unitary matrix $\Uv_h$ and $u_{k,i}$ being the $k$-th element of this column.
Similarly, we obtain
\bea\label{eq:S_gi}
S_{g,i}&\triangleq&\widetilde{\Uv}_h(\rho\hat{\mathbf{\Phi}}\widetilde{\mathbf{\Lambda}}_g\hat{\mathbf{\Phi}}+\rho\widetilde{\mathbf{\Lambda}}_y)^{-1}\widetilde{\Uv}_h^H]_{i,i}\nonumber\\
&=&\sum_{k=1}^{M}\frac{|u_{k,i}|^2}{|\hat{\phi}_k|^2\rho\lambda_{g,k}+\rho\lambda_{y,k}^{-1}}.
\eea
As it is always true that $|u_{k,i}|^2\leq1$ for all $(k,i)$, it simply follows from (\ref{eq:P_SE2}) that
\bea
P^{\text{SE}}_{\text{out}}~~~~~~~~~~~~~~~~~~~~~~~~~~~~~~~~~~~~~~~~~~~~~~~~~~~~~~~~~~~~~~~~~\nonumber\\
\dot{\leq}P\bigg(\sum_{k=1}^{N_S}\frac{1}{1+\rho\lambda_{h,k}}
+\sum_{k=1}^{M}\frac{1}{|\hat{\phi}_k|^2\rho\lambda_{g,k}+\rho\lambda_{y,k}^{-1}}>2^{-\frac{2R(\rho)}{N_S}}\bigg)\nonumber\\
\leq P\bigg(\sum_{k=1}^{N_S}\frac{1}{1+\rho\lambda_{h,k}}
+\sum_{k=1}^{M}\frac{1}{\eta\rho\lambda_{g,k}+\rho\lambda_{y,k}^{-1}}>2^{-\frac{2R(\rho)}{N_S}}\bigg),~~\nonumber
\eea
\vspace{-20pt}
\bea\label{eq:eta=1}
~~
\eea
where the second inequality holds because
$\hat{\mathbf{\Phi}}$ is optimum in terms of the trace minimization as shown in (\ref{eq:modified problem}) and we have
\bea\label{eq:trace change}
~~~\sum_{k=1}^{M}\frac{1}{|\hat{\phi}_k|^2\rho\lambda_{g,k}+\rho\lambda_{y,k}^{-1}}
=\text{Tr}\big(\rho\hat{\mathbf{\Phi}}^H\widetilde{\mathbf{\Lambda}}_g\hat{\mathbf{\Phi}}+\rho\widetilde{\mathbf{\Lambda}}_{y}^{-1}\big)^{-1}.\nonumber
\eea
Thus, setting $\hat{\mathbf{\Phi}}=\sqrt{\eta}\Iv_{M}$ clearly leads to the outage upper-bound (\ref{eq:eta=1}).
In addition, the following lemma, proven in Appendix \ref{Appendix:Proof of Lemma3}, implies that
$\text{Tr}(\Bv\Rv_y\Bv^H)<\text{Tr}(\rho\Bv\Bv^H)=\text{Tr}(\rho\hat{\mathbf{\Phi}}\hat{\mathbf{\Phi}}^H)$,
which means that $\eta$ can be chosen by $1$ to satisfy the relay power constraint.
\begin{Lemma}\label{Lemma:Lemma3}
{\it The covariance matrix of the relay signal $\yv$ (or the MMSE estimate of $\xv$ at the relay) in (\ref{eq:Ry})
is upper-bounded by the identity matrix as $\Rv_y\preceq\rho\Iv_{N_S}$.}
\end{Lemma}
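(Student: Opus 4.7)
The plan is to work out $\Rv_y$ in closed form by substituting the explicit MMSE choice $\Lv=(\Hv^H\Hv+\rho^{-1}\Iv_{N_S})^{-1}\Hv^H$ into the definition (\ref{eq:Ry}), reduce it to a single function of $\Hv^H\Hv$ using the eigenvalue decomposition $\Hv^H\Hv=\Uv_h\mathbf{\Lambda}_h\Uv_h^H$, and then verify the matrix inequality eigenvalue-by-eigenvalue. Because the ordering $\preceq$ on Hermitian matrices is equivalent to showing that every eigenvalue of $\rho\Iv_{N_S}-\Rv_y$ is nonnegative, the whole argument reduces to a scalar check.

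First I would substitute $\Lv=\Pv\Hv^H$ with $\Pv=(\Hv^H\Hv+\rho^{-1}\Iv_{N_S})^{-1}$ into (\ref{eq:Ry}) and expand the central factor: the key identity is
\begin{equation}
\Hv^H(\rho\Hv\Hv^H+\Iv_{N_R})\Hv=\rho(\Hv^H\Hv)^2+\Hv^H\Hv=\rho\,\Hv^H\Hv\,(\Hv^H\Hv+\rho^{-1}\Iv_{N_S}),\nonumber
\end{equation}
which lets one of the outer factors of $\Pv$ cancel with $(\Hv^H\Hv+\rho^{-1}\Iv_{N_S})$. Since $\Pv$ and $\Hv^H\Hv$ commute (both are functions of the same Hermitian matrix), the remaining expression simplifies to
\begin{equation}
\Rv_y=\rho\,\Hv^H\Hv\,(\Hv^H\Hv+\rho^{-1}\Iv_{N_S})^{-1}.\nonumber
\end{equation}

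Next I would diagonalize using $\Hv^H\Hv=\Uv_h\mathbf{\Lambda}_h\Uv_h^H$, which yields $\Rv_y=\Uv_h\,\mathbf{D}\,\Uv_h^H$ with $\mathbf{D}$ diagonal having entries $\rho\lambda_{h,k}/(\lambda_{h,k}+\rho^{-1})$ for $k=1,\ldots,N_S$. The inequality $\Rv_y\preceq\rho\Iv_{N_S}$ is then equivalent to checking that each such diagonal entry is bounded above by $\rho$, i.e.\ that $\lambda_{h,k}/(\lambda_{h,k}+\rho^{-1})\leq 1$, which is manifest since $\lambda_{h,k}\geq 0$ and $\rho^{-1}>0$. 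One can also observe that the upper bound is attained only in the limit $\rho\to\infty$, which makes the bound tight in the asymptotic regime used later in the DMT argument.

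No step should present real difficulty; the one place where care is needed is the algebraic cancellation that produces a symmetric closed form for $\Rv_y$, since the raw expression $(\bM+\rho^{-1}\Iv)^{-1}\Hv^H(\rho\Hv\Hv^H+\Iv_{N_R})\Hv(\bM+\rho^{-1}\Iv)^{-1}$ does not look like a function of $\bM=\Hv^H\Hv$ alone until the factorization above is applied. Once that identity is in hand, the result is a one-line scalar inequality.
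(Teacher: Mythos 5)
Your proposal is correct and follows essentially the same route as the paper's Appendix C: both reduce $\Rv_y$ to $\rho\Hv^H\Hv(\Hv^H\Hv+\rho^{-1}\Iv_{N_S})^{-1}$ via the same cancellation, after which the paper rewrites this as $\rho\Iv_{N_S}-(\Hv^H\Hv+\rho^{-1}\Iv_{N_S})^{-1}$ and notes the subtracted term is positive definite, while you check the equivalent scalar bound on each eigenvalue. The two final steps are interchangeable, so there is no substantive difference.
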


Using the results presented above and setting $R(\rho)=r\log\rho$, we obtain
\bea
P^{\text{SE}}_{\text{out}}~~~~~~~~~~~~~~~~~~~~~~~~~~~~~~~~~~~~~~~~~~~~~~~~~~~~~~~~~~~~\nonumber\\
\label{eq:outage definition}
\dot{\leq} P\left(\sum_{k=1}^{N_S}\frac{1}{1+\rho\lambda_{h,k}}+\sum_{k=1}^{M}\frac{1}{\rho\lambda_{g,k}+\rho\lambda_{y,k}^{-1}}> \rho^{-\frac{2r}{N_S}}\right)~~\\
\label{eq:loose bound}
\leq P\left(\frac{1}{\rho\lambda_{h,N_S}}+\frac{1}{\rho\lambda_{g,M}}> \rho^{-\frac{2r}{N_S}}\right),~~~~~~~~~~~~~~~~~~~~~
\eea
Finally, applying the harmonic mean bounds, i.e., $\frac{\min(a,b)}{2}\leq\frac{ab}{a+b}\leq\min(a,b)$ for $a>0$ and $b>0$,
we have the trivial asymptotic upper-bound
\bea\label{eq:mu}
P_{\text{out}}^{\text{SE}}\dot{\leq} P\left(\mu<\rho^{-\left(1-\frac{2r}{N_S}\right)}\right)
\eea
with $\mu\triangleq \min(\lambda_{h,N_S},\lambda_{g,M})$. We notice that the upper-bound (\ref{eq:mu}) vanishes only when $2r/N_S<1$.
Hence, the outage exponent lower-bound equals zero for $2r/N_S\geq1$. Supposing $2r/N_S<1$, we can write
$P_{\text{out}}~\dot{\leq}~F_{\mu}\big(\rho^{-(1-\frac{2r}{N_S})^+}\big)$ where
$F_{\mu}(\cdot)$ stands for the cumulative distribution function of $\mu$.

Meanwhile, the result in \cite[Lemma 2]{Changick:11TWC} implies that
for a small argument $\delta\ll1$, $F_{\mu}(\delta)$ asymptotically equals
\bea\label{eq:F_mu}
F_{\mu}(\delta)\doteq F_{\lambda_{h,N_S}}(\delta)+F_{\lambda_{g,M}}(\delta)
\eea
where $F_{\lambda_{h,N_S}}(\delta)\propto\delta^{(N_R-N_S+1)^+}$ and $F_{\lambda_{g,M}}(\delta)\propto\delta^{(N_R-M+1)(N_D-M+1)^+}$ \cite{Sengul:06}.
Therefore, the resulting outage upper-bound is
\bea\label{eq:dME}
P_{\text{out}}^{\text{SE}}&\dot{\leq}&\rho^{(N_R-N_S+1)^+\min(1,(N_D-M+1)^+)\left(1-\frac{2r}{N_S}\right)^+}\nonumber\\
&=&\rho^{-d_{\text{ME}}(r)},
\eea
and we establish the DMT lower-bound of the sperate encoding scheme.

\textbf{ (2) DMT Upper-bound }:
In what follows, we examine the DMT upper-bound of the joint encoding scheme.
As $\log\left(\cdot\right)$ is a concave function,
$P_{\text{out}}^{\text{JE}}$ in (\ref{eq:P_JE}) is bounded by the Jensen's inequality as
\bea\label{eq:Jensen}
P_{\text{out}}^{\text{JE}}&\geq&P\left(\frac{N_S}{2}\log\left(\frac{1}{N_S}\sum_{i=1}^{N_S}\frac{\rho}{[\Rv_e]_{i,i}}\right)<R(\rho)\right)\nonumber\\
&=&P\left(\frac{1}{N_S}\sum_{i=1}^{N_S}\frac{1}{S_{h,i}+S_{g,i}}<\rho^{\frac{2r}{N_S}}\right)\nonumber\\
&\geq&P\left(\min_i(S_{h,i}+S_{g,i})>\rho^{-\frac{2r}{N_S}}\right),
\eea
where $S_{h,i}$ and $S_{g,i}$ are defined in (\ref{eq:S_hi}) and (\ref{eq:S_gi}), respectively.

Now, let us define $\bar{i}\triangleq\arg\min_i S_{h,i}+S_{g,i}$ and
let $\mathcal{A}$ be the event $\{|u_{k,\bar{i}}|^2\geq \frac{1-\epsilon}{N_S},~\forall k \}$
where $\epsilon>0$ is a small positive number independent of $\rho$.
Then, we can show that $P(\mathcal{A})$ is finite and independent of $\rho$, i.e, $P(\mathcal{A})\doteq\rho^0$ similar to \cite[Appendix A]{Kumar:09}.
Therefore, the outage probability can be further bounded by
\bea\label{eq:Si}
P_{\text{out}}^{\text{JE}}~\geq ~ P\bigg(\sum_{k=1}^{N_S}\frac{|u_{k,\bar{i}}|^2}{1+\rho\lambda_{h,k}}+\sum_{k=1}^{M}\frac{|u_{k,\bar{i}}|^2}{|\hat{\phi}_k|^2\rho\lambda_{g,k}+\rho\lambda_{y,k}^{-1}}\nonumber\\
>~\rho^{-\frac{2r}{N_S}}\Bigg|\mathcal{A}\bigg)P(\mathcal{A})\nonumber\\
\dot{\geq}~~ P\bigg(\sum_{k=1}^{N_S}\frac{1}{1+\rho\lambda_{h,k}}+\sum_{k=1}^{M}\frac{1}{|\hat{\phi}_k|^2\rho\lambda_{g,k}+\rho\lambda_{y,k}^{-1}}\nonumber\\
>~\frac{N_S}{1-\epsilon}\rho^{-\frac{2r}{N_S}}\bigg)\nonumber\\
~\geq~P\bigg(\sum_{k=1}^{N_S}\frac{1}{1+\rho\lambda_{h,k}}+\sum_{k=1}^{M}\frac{1}{N_S\rho\lambda_{y,k}^{-1}(1+\rho\lambda_{g,k})}\nonumber\\
>\frac{N_S}{1-\epsilon}\rho^{-\frac{2r}{N_S}}\bigg),
\eea
where the last inequality holds from the constraint $\text{Tr}(\Bv\Rv_y\Bv^H)\leq N_S\rho$ which implies that
$\hat{\mathbf{\Phi}}\widetilde{\mathbf{\Lambda}}_y\hat{\mathbf{\Phi}}^H\preceq \rho N_S\Iv_M$; thus,
$|\hat{\phi}_k|^2\leq N_S\rho\lambda_{y,k}^{-1}$ for all $k=1,\ldots,M$.
Note that from the definition of $\Rv_y$ (see (\ref{eq:Ry2}) in Appendix \ref{Appendix:Proof of Lemma3}),
we have $\rho\lambda_{y,k}^{-1}=1+\rho^{-1}\lambda_{h,k}^{-1}$ for all $k$.
Recalling that the multiplexing gain $r>0$ is assumed to be positive, the right-hand side of (\ref{eq:Si}) vanishes, and thus
the scaling factor $N_S/(1-\epsilon)$ does not affect the diversity order.
In other words, the outage lower-bound is equivalently
\bea\label{eq:tight r=0}
P_{\text{out}}^{\text{JE}}
\dot{\geq} P\Big(\frac{1}{1+\rho\lambda_{h,N_S}}~~~~~~~~~~~~~~~~~~~~~~~~~~~~~~~~~~~~~~\nonumber\\
+~~\frac{1}{(1+\rho^{-1}\lambda_{h,M}^{-1})(1+\rho\lambda_{g,M})}>\rho^{-\frac{2r}{N_S}}\Big).
\eea

Now, let us define
\bea\label{eq:alpha beta}
\alpha_{i}\triangleq-\frac{\log\lambda_{h,i}}{\log\rho}~~\text{and}~~\beta_j\triangleq-\frac{\log\lambda_{g,j}}{\log\rho},
\eea
for $i=1,\ldots,N_S$ and $j=1,\ldots,M$.
In addition, we define a positive real number
$0<\kappa<1$ as $\kappa\triangleq1-2r/N_S$ to make the outage expression more compact. Then, (\ref{eq:tight r=0}) is alternatively expressed as
\bea
P_{\text{out}}^{\text{JE}}~~~~~~~~~~~~~~~~~~~~~~~~~~~~~~~~~~~~~~~~~~~~~~~~~~~~~~~~~~~~~~~~~~~~\nonumber\\
\dot{\geq} P\bigg(\frac{1}{1+\rho^{1-\alpha_{N_S}}}+\frac{1}{(1+\rho^{\alpha_{N_S}-1})(1+\rho^{1-\beta_M})}>\rho^{-\frac{2r}{N_S}}\bigg)~~~\nonumber\\
\doteq P\bigg(\frac{1}{\rho^{\kappa-\alpha_{N_S}}}
\!+\!\frac{1}{\rho^{\alpha_{N_S}+\kappa-2}+\rho^{\kappa-\beta_M}+\rho^{\kappa-\beta_M+\alpha_{N_S}-1}}\!>\!1\!\bigg)~~\nonumber\\
\doteq P\bigg(\frac{1}{\rho^{\kappa-\alpha_M}}
+\frac{1}{\rho^{\alpha_{N_S}+\kappa-2}+\rho^{\max(\kappa,\kappa+\alpha_{N_S}-1)-\beta_M}}>1~~~~~~\nonumber\\
\Big|\alpha_{N_S}>\kappa\bigg)P\big(\alpha_{N_S}>\kappa\big)~~~\nonumber\\
+P\bigg(\frac{1}{\rho^{\kappa-\alpha_{N_S}}}+\frac{1}{\rho^{\alpha_{N_S}+\kappa-2}+\rho^{\max(\kappa,\kappa+\alpha_{N_S}-1)-\beta_M}}>1~~~\nonumber\\
\Big|\alpha_{N_S}<\kappa\bigg)P\big(\alpha_{N_S}<\kappa\big)~~~\nonumber\\
\overset{(a)}{\doteq} P\big(\alpha_{N_S}>\kappa\big)+P\big(\beta_M>\kappa\big)P\big(\alpha_{N_S}<\kappa\big)~~~~~~~~~~~~~~~~~~~~\nonumber
\eea
\vspace{-20pt}
\bea\label{eq:alpha NS}
\overset{(b)}{\doteq} F_{\lambda_{h,N_S}}(\rho^{-\kappa})+F_{\lambda_{g,M}}(\rho^{-\kappa}),~~~~~~~~~~~~~~~~~~~~~~~~~~~~
\eea
where (a) is due to the following exponential equalities:
\bea
\frac{1}{\rho^{\kappa-\alpha_{N_S}}}\doteq\left\{\begin{array}{cc} \infty & \text{if}~~\alpha_{N_S}>\kappa \\
0 & \text{if}~~\alpha_{N_S}<\kappa \end{array}\right.~\text{and}~~~~~~~~~~~~~~~~~~~~~\nonumber\\
\frac{1}{\rho^{\alpha_{N_S}-(2-\kappa)}+\rho^{\max(\kappa,\kappa+\alpha_{N_S}-1)-\beta_M}}\doteq~~~~~~~~~~~~~~~~~~~~\nonumber\\
\left\{\begin{array}{cc} \infty & \text{if}~~\alpha_{N_S}<2-\kappa~~\text{and}~ \beta_M>\max(\kappa,\kappa+\alpha_{N_S}-1) \\
0 & \text{if}~~\alpha_{N_S}>2-\kappa~~\text{or}~ \beta_M<\max(\kappa,\kappa+\alpha_{N_S}-1) \end{array}\right.\nonumber
\eea
and (b) follows from the fact that $P(\alpha_{k}<\kappa)\doteq\rho^0$ for any $k$ \cite{Zheng:03}.
Note that in an asymptotic sense with $\rho\rightarrow\infty$,
the probability of $\alpha_{k}$ or $\beta_{k}$ taking any value on the discontinuity point, e.g., $\alpha_{N_S}=\kappa$, is negligible \cite{Kumar:09}.
Now, it is immediate to check that (\ref{eq:alpha NS}) is asymptotically equivalent to (\ref{eq:mu}).
Therefore, applying the same argument as in (\ref{eq:F_mu}), we find out that
the DMT upper-bound coincides with the previously found lower-bound,
and the proof is concluded.
\end{IEEEproof}

While the ML-based transceiving scheme \cite{Tang:07} enjoys the optimal DMT (\ref{eq:optimal DMT}), Theorem \ref{Theorem:Theorem1} shows that
the linear transceiving scheme suffers from a significant diversity loss.
This is attributed to the fact that the MMSE transceiver enforces the transmitted
symbols to be spatially separated at the destination at the cost of the diversity order.
It is also observed that under the linear transceiving strategy,
there may be no advantage in coding across antennas in terms of the DMT compared to the separate encoding scheme.
This is because the output SNRs of virtual parallel channels,
i.e., $\tau_k$'s become strongly correlated with each other; thus,
only the minimum eigenvalue in each hop essentially dominates the performance as in (\ref{eq:tight r=0}).

Meanwhile, the DMT expression (\ref{eq:Theorem1}) reveals that
as long as $N_S\leq N_D$, the DMT is determined by the first-hop channel only.
This observation suggests that designing the system such that $N_S<N_D$ may not be an efficient, since
putting additional antennas at the destination over $N_S$ does not yield any DMT advantage.
It is also of interest to compare the MMSE transceiver in $N_S\times N_R\times N_D$ relay channels with
the MMSE receiver in $N_S\times N_D$ P2P channels.
From \cite[Theorem 1]{Kumar:09},
it is immediate to check that if we deploy a relay node ($N_R$) between the transmitter ($N_S$) and the receiver ($N_D$)
such that $N_R>N_D$, a higher diversity gain as well as the coverage extension can be achieved over the P2P systems,
although the multiplexing gain will be cut in half due to the half-duplex operation at the relay.

Finally, it is important to remark that Theorem \ref{Theorem:Theorem1} is valid only for the positive multiplexing gain $r>0$.
This is due to the limitation of the DMT framework as an asymptotic notion
which do not distinguish between different spectral efficiencies that correspond to the same multiplexing gain.
In fact, if $r=0$, the bound (\ref{eq:tight r=0}) which leads to the DMT upper-bound of the joint encoding scheme does not hold in general, because
each summation term in (\ref{eq:Si}) is bounded above as $1/(1+\rho\lambda_{h,k})<1$ and $1/(\rho\lambda_{y,k}^{-1}(1+\rho\lambda_{g,k}))<1$ for all $k$.
This means that other eigenvalues which do not appear in (\ref{eq:tight r=0}) may also play a role, and thus typically
leads to a higher diversity gain than (\ref{eq:Theorem1}).
Indeed, the MMSE transceiver exhibits ML-like performance
in case where the coding is applied across antennas with sufficiently low spectral efficiency.
Details will be addressed in Section \ref{sec:DRT} through the DRT analysis.

In the meantime, the ZF transceiver obtains the same DMT as one in (\ref{eq:Theorem1}) as shown in the following theorem, and thus
our statements so far can also be applied to the ZF transceiver.
However, it is important to note that unlike the MMSE in Theorem \ref{Theorem:Theorem1}, the DMT with the ZF transceiver holds for every multiplexing gain $r\geq0$.
Therefore, the joint and separate coding schemes exhibit completely the same diversity order.
\begin{Theorem}\label{Theorem:Theorem2}
{\it In $N_S\times N_R\times N_d$ MIMO AF half-duplex relaying channels, the DMT under the ZF transceiver is the same as $d_{\text{ME}}(r)$ in (\ref{eq:Theorem1})
for both the joint and separate encoding schemes and holds for all multiplexing gain $r\geq0$.}
\end{Theorem}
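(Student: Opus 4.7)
The plan is to mirror the sandwich approach used in the proof of Theorem \ref{Theorem:Theorem1}, exploiting the cleaner structure of the ZF error covariance (\ref{eq:ZF error covariance}). Under the ZF transceiver the $k$-th output SNR is $\tau_k=\rho/[\Rv_e]_{k,k}$, and $\Rv_e$ decomposes as $(\Hv^H\Hv)^{-1}+(\Bv^H\Gv^H\Gv\Bv)^{-1}$ with no $\rho$-dependent regularization term. This is the key simplification: the factors $1/(1+\rho\lambda_{h,k})$ that appeared in the MMSE analysis are replaced by $1/(\rho\lambda_{h,k})$, whose asymptotic behaviour is insensitive to whether the rate grows with $\rho$ or is fixed.

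For the DMT lower bound I would work with $P_{\text{out}}^{\text{SE}}$, substitute the optimal $\hat{\Bv}=\widetilde{\Uv}_g\hat{\mathbf{\Phi}}\Uv_h^H$ from Section \ref{sec:ZF Transceiver}, and write $[\Rv_e]_{i,i}$ as a weighted sum over $1/(\rho\lambda_{h,k})$ and $1/(|\hat\phi_k|^2\rho\lambda_{g,k})$ through the eigendecompositions. Using $|u_{k,i}|^2\leq 1$, an analogue of Lemma \ref{Lemma:Lemma3} for $\Rv_z$ in (\ref{eq:Rz}) (which shows $\Rv_z\preceq\rho\Iv_{N_S}$ and thus justifies setting $\hat{\mathbf{\Phi}}=\sqrt{\eta}\Iv$ with $\eta=1$ while honouring the relay power constraint), and the harmonic mean inequality, the outage upper bound collapses to $P(\min(\lambda_{h,N_S},\lambda_{g,M})<\rho^{-(1-2r/N_S)^+})$. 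Applying the joint tail bound (\ref{eq:F_mu}) and the single-eigenvalue asymptotics from \cite{Sengul:06} yields $\rho^{-d_{\text{ME}}(r)}$ exactly as in (\ref{eq:dME}).

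For the DMT upper bound I would apply Jensen's inequality to $P_{\text{out}}^{\text{JE}}$ in the style of (\ref{eq:Jensen}), condition on the probability-$\dot{=}\rho^0$ event $\mathcal{A}=\{|u_{k,\bar i}|^2\geq (1-\epsilon)/N_S,~\forall k\}$ to extract a uniform lower bound on the unitary-matrix entries, and then use the relay power constraint $\mathrm{Tr}(\Bv\Rv_z\Bv^H)\leq N_S\rho$ to control $|\hat\phi_k|^2$ from above. This reduces the outage to a probability governed by $\lambda_{h,N_S}$ and $\lambda_{g,M}$ alone, which after invoking the exponential-order analysis (\ref{eq:alpha beta})--(\ref{eq:alpha NS}) matches the lower bound.

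The main obstacle, and the reason this theorem strengthens Theorem \ref{Theorem:Theorem1} to cover $r=0$, is to confirm that no regularization-like term artificially inflates the diversity in the lower-bound direction when the rate is fixed. In the MMSE proof the inequalities $1/(1+\rho\lambda_{h,k})<1$ and $1/(\rho\lambda_{y,k}^{-1}(1+\rho\lambda_{g,k}))<1$ allowed other eigenvalues to contribute at $r=0$; the absence of the additive $\rho^{-1}\Iv_{N_S}$ in the ZF error covariance removes this loophole, so the minimum-eigenvalue pair $(\lambda_{h,N_S},\lambda_{g,M})$ remains the dominant outage mechanism for every $r\geq 0$, leaving $d_{\text{ME}}(r)$ tight for joint and separate encoding alike.
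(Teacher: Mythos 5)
Your proposal follows essentially the same route as the paper's proof: sandwich the outage between an upper bound on $P^{\text{SE}}_{\text{out}}$ (via the ECD form of $\Rv_e$, the substitution $\hat{\mathbf{\Phi}}=\Iv$, and the harmonic-mean/tail asymptotics) and a lower bound on $P^{\text{JE}}_{\text{out}}$ (Jensen's inequality, the event $\mathcal{A}$, and the power-constraint bound $|\hat\phi_k|^2\leq N_S$), both collapsing to the $(\lambda_{h,N_S},\lambda_{g,N_S})$ tail, and your observation that the absence of the $\rho^{-1}$ regularization makes each summand unbounded above---so the minimum eigenvalues dominate for every $r\geq 0$---is exactly the paper's closing remark. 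One small correction: the relevant fact about $\Rv_z$ (Lemma \ref{Lemma:Lemma4}) is $\Rv_z=\rho\Iv_{N_S}+(\Hv^H\Hv)^{-1}\succeq\rho\Iv_{N_S}$ together with the exponential equality $\Rv_z\doteq\rho\Iv_{N_S}$, not the literal bound $\Rv_z\preceq\rho\Iv_{N_S}$ you state; it is the exponential equality that licenses taking $\hat{\mathbf{\Phi}}=\Iv$ within the relay power constraint, and the rest of your argument goes through unchanged.
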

\begin{IEEEproof}
Similar to the case of Theorem \ref{Theorem:Theorem1},
the proof will be made by showing that the upper-bound of $P_{\text{out}}^{\text{JE}}$
and the lower-bound of $P_{\text{out}}^{\text{JE}}$ are asymptotically equivalent.
With the ZF strategy, we notice that $N_S\leq\min(N_R,N_D)$ and $\tau_k=\rho/[\Rv_e]_{k,k}$ \cite{Palomar:03},
and set the target data rate by $R(\rho)=r\log\rho$ with $r\geq0$.
The following lemma, proven in Appendix \ref{Appendix:Proof of Lemma4}, will be useful during our derivations.
\begin{Lemma}\label{Lemma:Lemma4}
{\it The covariance matrix of the relay signal $\zv$ (or the ZF estimate of $\xv$ at the relay) in (\ref{eq:Rz})
is exponentially equivalent to the identity matrix as $\Rv_z\doteq\rho\Iv_{N_S}$.}
\end{Lemma}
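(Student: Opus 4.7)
The plan is to prove Lemma \ref{Lemma:Lemma4} by a direct algebraic simplification of $\Rv_z$ followed by a short asymptotic argument. Because $\Lv_z = (\Hv^H\Hv)^{-1}\Hv^H$ is a left inverse of $\Hv$ (i.e.\ $\Lv_z\Hv = \Iv_{N_S}$), the two summands inside $\Rv_z$ collapse in a particularly clean way, so no SVD machinery or bounding inequality is needed for the main identity. Only at the very end do I have to invoke the $\doteq$ notation from the paper.

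First I would expand
\begin{equation*}
\Rv_z = \rho\,\Lv_z\Hv\Hv^H\Lv_z^H + \Lv_z\Lv_z^H.
\end{equation*}
Using $\Lv_z\Hv = \Iv_{N_S}$ (so that $\Lv_z\Hv\Hv^H\Lv_z^H = \Iv_{N_S}$) on the first summand, and $\Lv_z\Lv_z^H = (\Hv^H\Hv)^{-1}\Hv^H\Hv(\Hv^H\Hv)^{-1} = (\Hv^H\Hv)^{-1}$ on the second, the expression reduces to the deterministic-in-$\rho$ form
\begin{equation*}
\Rv_z = \rho\,\Iv_{N_S} + (\Hv^H\Hv)^{-1}.
\end{equation*}
This exact identity is the crux of the lemma; the remaining work is to show that the additive perturbation $(\Hv^H\Hv)^{-1}$ is exponentially negligible.

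Second, I would diagonalize: using $\Hv^H\Hv=\Uv_h\mathbf{\Lambda}_h\Uv_h^H$, the identity above gives $\Rv_z = \Uv_h(\rho\Iv_{N_S}+\mathbf{\Lambda}_h^{-1})\Uv_h^H$, so the $k$-th eigenvalue of $\Rv_z$ is $\lambda_{z,k} = \rho+\lambda_{h,k}^{-1}$. Since $\lambda_{h,k}^{-1}$ does not depend on $\rho$, almost every channel realization gives $\lim_{\rho\to\infty}\log(\rho+\lambda_{h,k}^{-1})/\log\rho = 1$, i.e.\ $\lambda_{z,k}\doteq\rho$ for each $k$. To make this uniform in the DMT sense used later in Theorem \ref{Theorem:Theorem2}, I would parameterize $\lambda_{h,k}\doteq\rho^{-\alpha_k}$ as in (\ref{eq:alpha beta}) and note that the regime $\alpha_k\geq 1$ (where $\lambda_{h,k}^{-1}$ would overwhelm $\rho$) corresponds to an event whose probability decays polynomially in $\rho$ and is always dominated by the outage exponent we ultimately seek; hence it may be excluded from the exponential equivalence in the same spirit as the null events handled in the proof of Theorem \ref{Theorem:Theorem1}.

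The main obstacle, such as it is, lies not in the algebra but in choosing how to interpret and justify $\doteq$ for a random matrix: writing a per-realization equivalence is trivial once the identity $\Rv_z = \rho\Iv_{N_S}+(\Hv^H\Hv)^{-1}$ is in hand, but one must ensure that when Lemma \ref{Lemma:Lemma4} is substituted into the outage bounds of Theorem \ref{Theorem:Theorem2} the small-eigenvalue tail of $\Hv^H\Hv$ does not disturb the exponential equivalence. I would handle this by stating the equivalence at the level of each eigenvalue $\lambda_{z,k}\doteq\rho$ after conditioning on $\alpha_k<1$, which is exactly the regime that governs the DMT analysis that uses this lemma.
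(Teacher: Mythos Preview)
Your proposal is correct and follows essentially the same route as the paper: both reduce $\Rv_z$ to the exact identity $\Rv_z=\rho\Iv_{N_S}+(\Hv^H\Hv)^{-1}$ and then argue that the additive term is exponentially negligible via the parameterization $\lambda_{h,k}=\rho^{-\alpha_k}$ with $\alpha_k<1$. The only cosmetic difference is that the paper sandwiches $\Rv_z$ between $\rho\Iv_{N_S}$ and $\rho(1+\rho^{\alpha_{N_S}-1})\Iv_{N_S}$ using the smallest eigenvalue $\lambda_{h,N_S}$, whereas you diagonalize and treat each eigenvalue $\lambda_{z,k}=\rho+\lambda_{h,k}^{-1}$ separately; both arrive at the same conclusion by the same mechanism.
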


{\textbf{(1) DMT Lower-bound}}:
From the results in Section \ref{sec:ZF Transceiver} and (\ref{eq:P_SE}),
the outage probability, constrained to use the separate encoding, is written by
\bea
P^{\text{SE}}_{\text{out}}\doteq\max_{i}P\bigg(\log\Big(1+~~~~~~~~~~~~~~~~~~~~~~~~~~~~~~~~~~~~~~~\nonumber\\
\frac{\rho}{\big[(\Hv^H\Hv)^{-1}\big]_{i,i}
+\big[(\Uv_h\hat{\mathbf{\Phi}}\widetilde{\mathbf{\Lambda}}_g\hat{\mathbf{\Phi}}^H\Uv_h^H)^{-1}\big]_{i,i}}\Big)<\frac{2R(\rho)}{N_S}\bigg)\nonumber\\
\leq P\bigg(\min_i\big([(\rho\Hv^H\Hv)^{-1}]_{i,i}+\big[(\rho\Uv_h\hat{\mathbf{\Phi}}\widetilde{\mathbf{\Lambda}}_g\hat{\mathbf{\Phi}}^H\Uv_h^H)^{-1}\big]_{i,i}\big)\nonumber\\
>\rho^{-\frac{2r}{N_S}}\bigg)\nonumber\\
\leq P\left(\text{Tr}\big((\rho\Hv^H\Hv)^{-1}\big)+\text{Tr}\big((\rho\hat{\mathbf{\Phi}}\widetilde{\mathbf{\Lambda}}_g\hat{\mathbf{\Phi}}^H)^{-1}\big)
>\rho^{-\frac{2r}{N_S}}\right).\nonumber
\eea
Meanwhile, it is revealed from Lemma \ref{Lemma:Lemma4} that
$\text{Tr}(\Bv\Rv_z\Bv^H)\doteq\text{Tr}(\rho\hat{\mathbf{\Phi}}\hat{\mathbf{\Phi}}^H)\leq N_S\rho$.
Therefore, by setting $\hat{\mathbf{\Phi}}=\Iv_{N_S}$, the outage probability can be further bounded by
\bea\label{eq:ZF loose bound}
P^{\text{SE}}_{\text{out}}&\dot{\leq}&P\left(\sum_{i=1}^{N_S}\frac{1}{\rho\lambda_{h,i}}+\sum_{i=1}^{N_S}\frac{1}{\rho\lambda_{g,i}}>\rho^{-\frac{2r}{N_S}}\right)\nonumber\\
&\doteq&P\left(\frac{1}{\rho\lambda_{h,N_S}}+\frac{1}{\rho\lambda_{g,N_S}}>\rho^{-\frac{2r}{N_S}}\right),\nonumber
\eea
which is equivalent to the previous result in (\ref{eq:loose bound}),
and thus the DMT lower-bound is established.

{\textbf{(2) DMT Upper-bound}}: Considering the output SNR of the ZF transceiver and applying the Jensen's inequality,
 the outage probability (\ref{eq:P_JE}) is lower-bounded by
\bea
P_{\text{out}}^{\text{JE}}\geq~P\bigg(\frac{N_S}{2}\log\bigg(\frac{1}{N_S}\sum_{k=1}^{N_S}\Big(1+\frac{\rho}{[\Rv_e]_{k,k}}\Big)\bigg)<R(\rho)\bigg)\nonumber\\
\dot{\geq}~P\bigg(\min_i\Big([(\rho\Uv_h\mathbf{\Lambda}_h\Uv_h^H)^{-1}]_{i,i}~~~~~~~~~~~~~~~~~~~~~~~\nonumber\\
+\big[(\rho\Uv_h\hat{\mathbf{\Phi}}\widetilde{\mathbf{\Lambda}}_g\hat{\mathbf{\Phi}}^H\Uv_h^H)^{-1}\big]_{i,i}\Big)>\rho^{-\frac{2r}{N_S}}\bigg).\nonumber
\eea
According to Lemma \ref{Lemma:Lemma4}, the relay power constraint is asymptotically $\text{Tr}(\rho\hat{\mathbf{\Phi}}\hat{\mathbf{\Phi}}^H)\leq N_S\rho$;
thus, we have $\hat{\mathbf{\Phi}}\hat{\mathbf{\Phi}}^H\preceq N_S\Iv_{N_S}$, i.e., $|\hat{\phi}_k|^2\leq N_S$ for all $k$.
Then, applying the same argument as in (\ref{eq:Si}), we obtain
\bea
\!\!\!\!\!\!\!\!\!\!\!&&P_{\text{out}}^{\text{JE}}\nonumber\\
\!\!\!\!\!\!\!\!\!\!\!&&\dot{\geq} P\left(\min_i\left(\sum_{k=1}^{N_S}\frac{|u_{k,i}|^2}{\rho\lambda_{h,k}}
+\sum_{k=1}^{N_S}\frac{|u_{k,i}|^2}{|\hat{\phi}_k|^2\rho\lambda_{g,k}}\right)>\rho^{-\frac{2r}{N_S}}\right)\nonumber\\
\label{eq:ZF DMT UB}
\!\!\!\!\!\!\!\!\!\!\!&&\doteq P\left(\sum_{k=1}^{N_S}\frac{1}{\rho\lambda_{h,k}}
+\sum_{k=1}^{N_S}\frac{1}{N_S\rho\lambda_{g,k}}>\frac{N_S}{1-\epsilon}\rho^{-\frac{2r}{N_S}}\right)\\
\label{eq:ZF DMT UB2}
\!\!\!\!\!\!\!\!\!\!\!&&\doteq P\left(\frac{1}{\rho\lambda_{h,N_S}}
+\frac{1}{\rho\lambda_{g,N_S}}>\rho^{-\frac{2r}{N_S}}\right),
\eea
which exhibits the same outage expression as (\ref{eq:loose bound}); thus,
the DMT upper-bound is readily obtained by following similar steps from (\ref{eq:mu}) to (\ref{eq:dME}).
We note that in contrast to the previous result in (\ref{eq:Si}), each summation term in (\ref{eq:ZF DMT UB})
is unbounded above for small channel gains $\rho\lambda_{h,k}$ or $\rho\lambda_{g,k}$,
which implies that there is no room for the eigenvalues larger than
$\lambda_{h,N_S}$ and $\lambda_{g,N_S}$ to contribute to the outage probability in (\ref{eq:ZF DMT UB})
no matter which coding scheme is applied with finite rate ($r=0$).
Therefore, the derived DMT holds for all multiplexing gain $r\geq0$.
\end{IEEEproof}

In what follows, we study the DMT performance of the naive-ZF and -MMSE schemes to examine the effect of no CSI at the relay.
\begin{Theorem}\label{Theorem:Theorem3}
{\it The DMT of the $N_S\times N_R\times N_d$ MIMO AF half-duplex relaying channels with the naive-MMSE is given by
\bea\label{eq:Theorem3}
d_{\text{N-ME}}(r)=(\min(N_R,N_D)-N_S+1)^+\left(1-\frac{2r}{N_S}\right)^+
\eea
for both the joint and separate encoding schemes with positive multiplexing gain $r>0$.}
\end{Theorem}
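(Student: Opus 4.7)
The plan is to mirror Theorem \ref{Theorem:Theorem1}, proving a matching pair of bounds on $P_{\text{out}}^{\text{SE}}$ from above and on $P_{\text{out}}^{\text{JE}}$ from below. Both the fixed-gain and variable-gain naive rules give $\delta^2 \doteq \rho^0$ in the exponential sense, since $\delta^2 \leq P_R/E[\|\yv_R\|^2] \doteq \rho/\rho = 1$ for the variable-gain case on typical channels, so without loss of generality I would fix $\delta$ as an order-one constant. Substituting $\hat{\Qv} = \delta\Iv_{N_R}$ into the MMSE error formula, invoking the SVD $\Gv = \Uv_g\Sv_g\Vv_g^H$, and exploiting the rotational invariance $\tilde{\Hv} \triangleq \Vv_g^H\Hv \sim \Hv$, the error covariance reduces to
\[
\Rv_e^{-1} = \rho^{-1}\Iv_{N_S} + \tilde{\Hv}^H\mathbf{\Lambda}_W\tilde{\Hv},
\]
where $\mathbf{\Lambda}_W$ is diagonal with entries $\delta^2\lambda_{g,k}/(1+\delta^2\lambda_{g,k})$ and only the first $m \triangleq \min(N_R, N_D)$ entries are nonzero. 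This already makes the ``naive'' effect visible: compared with Theorem \ref{Theorem:Theorem1}, the effective relay-to-destination dimensionality collapses from $N_R$ down to $m$ because no CSI at the relay prevents it from projecting the signal into the row space of $\Gv$.

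For the DMT lower bound I would imitate the separate-encoding derivation of Theorem \ref{Theorem:Theorem1}: using $|u_{k,i}|^2 \leq 1$ on the eigenbasis of $\Rv_e^{-1}$, bounding $[\Rv_e]_{i,i}$ by a sum of reciprocals of eigenvalues, applying the harmonic-mean inequality, and converting to the $\alpha_i,\beta_j$ variables of (\ref{eq:alpha beta}). This reduces the dominant outage event to $\{\min(\lambda_{h,N_S},\lambda_{g,m}) \,\dot{<}\, \rho^{-(1-2r/N_S)^+}\}$, whose probability is $\doteq \rho^{-(m - N_S + 1)^+(1 - 2r/N_S)^+}$ by the Wishart tail expansions used in (\ref{eq:F_mu}). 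For the DMT upper bound I would apply Jensen's inequality to $P_{\text{out}}^{\text{JE}}$ as in (\ref{eq:Jensen}) and reuse the concentration event $\mathcal{A} = \{|u_{k,\bar{i}}|^2 \geq (1-\epsilon)/N_S,\forall k\}$, of $\rho$-independent probability, to obtain a matching lower bound with the same exponent; unlike Theorem \ref{Theorem:Theorem1} there is no relay power constraint coupling the $|\hat{\phi}_k|^2$'s to the $\lambda_{y,k}$'s, so this direction is actually easier than in the CSI case.

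The principal obstacle is the per-stream gain $\Lambda_{W,k} = \delta^2\lambda_{g,k}/(1+\delta^2\lambda_{g,k})$, which neither scales as a pure power of $\rho$ nor factors cleanly out of the quadratic form $\tilde{\Hv}^H\mathbf{\Lambda}_W\tilde{\Hv}$. I would handle this via the same discontinuity-point case split (a)--(b) used in (\ref{eq:alpha NS}): when $\beta_k < 1$ the gain is $\doteq 1$, and when $\beta_k \geq 1$ it is $\doteq \rho^{-(\beta_k-1)}$, so only the event $\{\beta_m \geq 1\}$, coupled with $\{\alpha_{N_S} \geq 1-2r/N_S\}$, contributes to the outage exponent at positive $r$; this is precisely why $\lambda_{g,m}$ rather than any other eigenvalue of $\Gv^H\Gv$ appears in the final answer. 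A second subtlety is the regime $N_S > m$: there, $\tilde{\Hv}^H\mathbf{\Lambda}_W\tilde{\Hv}$ has rank at most $m < N_S$ almost surely, so $\Rv_e$ has an eigenvalue of order $\rho$, giving $\max_i [\Rv_e]_{i,i} \,\dot{\geq}\, \rho$ with constant probability and forcing the outage probability to stay bounded away from zero. This produces the $(\cdot)^+$ truncation in (\ref{eq:Theorem3}) and makes the diversity vanish exactly when $N_S > \min(N_R,N_D)$.
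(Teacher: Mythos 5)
Your setup (fixed gain $\delta\doteq\rho^0$, noise whitening, and the rank argument producing the $(\cdot)^+$ truncation when $N_S>\min(N_R,N_D)$) matches the paper, but the core of your diversity computation has a genuine gap. You reduce the outage to the event $\{\min(\lambda_{h,N_S},\lambda_{g,m})\,\dot{<}\,\rho^{-(1-2r/N_S)^+}\}$ with $m=\min(N_R,N_D)$, i.e.\ you separate the two hops' eigenvalues. That separation is a feature of the CSI-assisted transceiver only: it comes from Lemma \ref{Lemma:Lemma2}, where the relay aligns the eigenbasis of the second hop with $\Uv_h$ so that $\Rv_e$ splits into a term in the $\lambda_{h,k}$ plus a term in the $\lambda_{g,k}$. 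With $\Qv=\delta\Iv_{N_R}$ no such alignment occurs; the effective channel is the Rayleigh \emph{product} $\Gv\Hv$, and the smallest eigenvalue of $\Hv^H\Gv^H\Gv\Hv$ is not governed by $\min(\lambda_{h,N_S},\lambda_{g,m})$. Concretely, for $N_S=2$, $N_R=N_D=4$ the theorem gives diversity $3(1-r)^+$, whereas your event is dominated by $\lambda_{g,4}$, the smallest eigenvalue of a square $4\times4$ Wishart matrix, whose tail exponent is $(N_R-m+1)(N_D-m+1)=1$; your sandwich would therefore close at diversity $1$. The point is that when $\Gv$ is nearly singular, the random $N_S$-dimensional column space of $\Hv$ generically avoids the near-null direction of $\Gv$, so the product channel does not go into outage; capturing this requires the \emph{joint} statistics of the product-channel eigenvalues, not the marginal tails of $\lambda_{h,N_S}$ and $\lambda_{g,m}$. (Your stated probability $\rho^{-(m-N_S+1)^+(\cdot)}$ for your own event is also inconsistent with the Wishart tails you invoke, which would give $\min(N_R-N_S+1,\,|N_R-N_D|+1)$.)

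The missing ingredient is the asymptotic joint distribution of the exponents $\gamma_k=-\log\lambda_{t,k}/\log\rho$ of $\Hv^H\Gv^H\Gv\Hv$. The paper first shows $\Iv_{N_D}\succeq\Rv_n^{-1}\succeq(1+\delta^2\lambda_{g,1})^{-1}\Iv_{N_D}\doteq\rho^0\Iv_{N_D}$, so the naive link is exponentially a Rayleigh product channel with $\Rv_e=(c\Hv^H\Gv^H\Gv\Hv+\rho^{-1}\Iv_{N_S})^{-1}$; it then bounds $P^{\text{JE}}_{\text{out}}$ below via Jensen and the event $\mathcal{A}$ exactly as you propose, arriving at $P(1/(\rho\lambda_{t,N_S})>\rho^{-2r/N_S})$; and finally it evaluates this with Varadhan's lemma using the product-channel exponents $\theta_2(\cv)$ and $\theta_3(\cv)$ of \cite{SYang:11} (Appendix \ref{Appendix:Rayleigh Product}), obtaining $N_R-N_S+1$ when $N_R\le N_D$ and $N_D-N_S+1$ when $N_D<N_R$. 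Your proposal never invokes any product-channel eigenvalue result — you convert to the individual $\alpha_i,\beta_j$ variables instead — so neither direction of your argument can recover the exponent in (\ref{eq:Theorem3}).
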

\begin{IEEEproof}
Let us assume that $\Qv=\delta\Iv_{N_R}$ where $\delta$ is chosen to
satisfy the relay power constraint (\ref{eq:relay power constraint}) as
\bea
\delta^2=\frac{P_R}{\text{Tr}(\rho\Hv\Hv^H+\Iv_{N_R})}=\left(\frac{1}{N_S}\sum_{k=1}^{N_S}\lambda_{h,k}+\rho^{-1}\right)^{-1}.
\eea
Then, it is readily seen that $\delta\doteq c$ for some real positive value $c$ because we have
$\frac{1}{N_S}\sum_{k=1}^{N_S}\rho^{-\alpha_k}\doteq\rho^0$, i.e.,
the variable gain $\delta$ based on the channel state $\Hv$ is exponentially equivalent to the fixed gain relay $\delta=c$.
Similarly, one can show that $\Iv_{N_D}\succeq\Rv_n^{-1}\succeq(1+\delta^2\lambda_{g,1})^{-1}\Iv_{N_D}\doteq\rho^0\Iv_{N_D}$ \cite{SYang:07},
which means that the amplified noise at the relay does not affect the diversity order; thus,
the naive relaying can be regarded as a Rayleigh product channel \cite{SYang:11} whose error covariance matrix is
given by $\Rv_e=(c\Hv^H\Gv^H\Gv\Hv+\rho^{-1}\Iv_{N_S})^{-1}$.

Keeping this in mind, let us focus on the outage lower-bound of the joint encoding scheme.
Define $\lambda_{t,1}>\cdots>\lambda_{t,N_S}$ as the eigenvalues of $\Hv^H\Gv^H\Gv\Hv$.
Then, setting the target rate $R(\rho)=r\log\rho$ with $r>0$ and following the similar approaches as in (\ref{eq:Jensen}) and (\ref{eq:Si}),
it is easy to show that
\bea\label{eq:naive LB}
P_{\text{out}}^{\text{JE}}&\dot{\geq}& P\left(\sum_{k=1}^{N_S} \frac{1}{1+\rho\lambda_{t,k}}>\frac{N_S}{1-\epsilon}\rho^{-\frac{2r}{N_S}}\right)\nonumber\\
&\doteq&P\left(\frac{1}{\rho\lambda_{t,N_S}}>\rho^{-\frac{2r}{N_S}}\right).
\eea
First, we observe that if $N_S>\min(N_R,N_D)$, the outage probability (\ref{eq:naive LB}) leads to a trivial solution $P_{\text{out}}^{\text{JE}}~\dot{\leq}~\rho^0$,
due to the rank constraint of $\Hv^H\Gv^H\Gv\Hv$ which is equal to $N=\min(N_S,N_R,N_D)$. Thus, we assume $N_S\leq\min(N_R,N_D)$ from now on.
Note that the exponential equality (\ref{eq:naive LB}) holds only when $r>0$ for similar reason to (\ref{eq:tight r=0}).
Now, we define $\gamma_{k}\triangleq-\log\lambda_{t,k}/\log\rho$ for $k=1,\ldots,N_S$.
Then, (\ref{eq:naive LB}) is further simplified as $P_{\text{out}}^{\text{JE}}~\dot{\geq}~P(\mathcal{E}_{\gamma})\doteq\rho^{-d(r)}$ with the outage event
$\mathcal{E}_{\gamma}\triangleq\{\gamma_{N_S}>(1-\frac{2r}{N_S})^+\}$.

Let $f(\cv)$ be the p.d.f. of a random vector $\cv=[\gamma_1,\ldots,\gamma_{N_S}]$ and
$\theta(\cv)$ denote its exponential order, i.e., $f(\cv)\doteq\rho^{-\theta(\cv)}$. Then, one can show that
the DMT is calculated as \cite{Zheng:03}
\bea
d(r)=\inf_{\mathbf{c}\in\mathcal{E}_{\gamma},\forall\gamma_k>0}\theta(\cv).\nonumber
\eea
For Rayleigh product channels, it was shown in \cite{SYang:11} that $\theta(\cv)$ is given in three different forms according to antenna configurations\footnote{
For completeness, some of key results of \cite{SYang:11} are summarized in Appendix \ref{Appendix:Rayleigh Product}.}, but
we only need to consider two cases $N_R\leq N_D$ and $N_D< N_R$ since we assume $N_S=N$.
For the first case, by applying the result (\ref{eq:fc2}) in Appendix \ref{Appendix:Rayleigh Product}, we have
\bea
d(r)&\doteq&\inf_{\mathbf{c}\in\mathcal{E}_{\gamma},\forall\gamma_k>0}\theta_2(\cv)\nonumber\\
&=&(N_R-N_S+1)\left(1-\frac{2r}{N_S}\right)^+,\nonumber
\eea
as the infimum is obtained when $\gamma_{k}=0$ for $k=1,\ldots,N_S-1$ and $\gamma_{N_S}=1$.
Similarly, for the second case, by adopting the result in (\ref{eq:fc3}), we have
\bea
d(r)&\doteq&\inf_{\mathbf{c}\in\mathcal{E}_{\gamma},\forall\gamma_k>0}\theta_3(\cv)\nonumber\\
&=&(N_D-N_S+1)\left(1-\frac{2r}{N_S}\right)^+.\nonumber
\eea
Finally, combining of the two, we prove that the DMT upper-bound equals $d_{\text{N-ME}}(r)$ in (\ref{eq:Theorem3}).
Meanwhile, it is immediate to show that the separate encoding scheme achieves
the same DMT. Details are trivial, and thus omitted.
\end{IEEEproof}

On the other hand, with the naive-ZF, the error covariance matrix will be $\Rv_e=(c\Hv^H\Gv^H\Gv\Hv)^{-1}$ which leads to the outage lower-bound
\bea\label{eq:naive ZF LB1}
P_{\text{out}}^{\text{JE}}&\dot{\geq}& P\left(\sum_{k=1}^{N_S} \frac{1}{\rho\lambda_{t,k}}>\frac{N_S}{1-\epsilon}\rho^{-\frac{2r}{N_S}}\right).\\
\label{eq:naive ZF LB2}
&\doteq&P\left(\frac{1}{\rho\lambda_{t,N_S}}>\rho^{-\frac{2r}{N_S}}\right).
\eea
Recall that each summation term $1/\rho\lambda_{t,k}$ in (\ref{eq:naive ZF LB1}) is unbounded above for the small channel gain $\rho\lambda_{t,k}$.
Therefore, in contrast to (\ref{eq:naive LB}), the equality (\ref{eq:naive ZF LB2}) holds for every multiplexing gain $r\geq0$,
from which the DMT of the naive-ZF follows.
\begin{Theorem}\label{Theorem:Theorem4}
{\it In the $N_S\times N_R\times N_d$ MIMO AF half-duplex relaying channels, the DMT under the naive-ZF is the same as $d_\text{N-ME}(r)$
(\ref{eq:Theorem3}) for both the joint and separate encoding schemes and holds for all positive multiplexing gain $r\geq0$.}
\end{Theorem}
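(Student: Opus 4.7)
The plan is to mirror the two-sided argument used for the naive-MMSE in Theorem \ref{Theorem:Theorem3}: prove that an outage upper bound for the separate-encoding scheme and a Jensen-based outage lower bound for the joint-encoding scheme both reduce to the same exponential-order event governed by $\lambda_{t,N_S}$, the smallest eigenvalue of the compound channel $\Hv^H\Gv^H\Gv\Hv$. The excerpt already supplies the key ingredients (the relay signal model with $\hat{\Qv}=\delta\Iv_{N_R}$, the exponential equivalence $\delta\doteq c$, the noise bound $\Rv_n^{-1}\doteq\rho^0\Iv_{N_D}$, and the Rayleigh-product DMT formulas in the appendix), so the main work is reducing both outage expressions to the smallest eigenvalue and then arguing that the equivalence persists for every $r\geq 0$, not merely $r>0$.

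First, I would establish the outage upper bound under separate encoding. Starting from the ZF output SNR $\tau_k=\rho/[\Rv_e]_{k,k}$ with $\Rv_e\doteq(c\Hv^H\Gv^H\Gv\Hv)^{-1}$ (again using the reduction of the naive relay to a Rayleigh product channel), I would bound each diagonal of $\Rv_e$ by the trace and write
\begin{equation*}
P^{\text{SE}}_{\text{out}}\dot{\leq} P\Big(\sum_{k=1}^{N_S}\frac{1}{\rho\lambda_{t,k}}>\rho^{-\frac{2r}{N_S}}\Big)\doteq P\Big(\frac{1}{\rho\lambda_{t,N_S}}>\rho^{-\frac{2r}{N_S}}\Big),
\end{equation*}
which, crucially, requires no restriction on $r$ because the summands are unbounded. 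This matches (\ref{eq:naive ZF LB2}) from the opposite direction.

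Next, for the joint-encoding lower bound I would import the two inequalities (\ref{eq:naive ZF LB1})--(\ref{eq:naive ZF LB2}) from the paragraph immediately preceding the theorem; those were derived by the same Jensen-plus-conditioning-on-$\mathcal{A}$ device as in (\ref{eq:Jensen})--(\ref{eq:Si}), the only difference being that the ZF error covariance contributes $1/(\rho\lambda_{t,k})$ rather than $1/(1+\rho\lambda_{t,k})$. The point to emphasise is that because $1/(\rho\lambda_{t,k})$ diverges for small $\rho\lambda_{t,k}$, the step that keeps only $k=N_S$ is exponentially tight even at $r=0$; in contrast, the naive-MMSE derivation needed $r>0$ to prevent the bounded summands $1/(1+\rho\lambda_{t,k})<1$ from summing below the threshold. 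Thus the two bounds meet at the same exponential order $P(1/\rho\lambda_{t,N_S}>\rho^{-2r/N_S})$ for all $r\geq 0$.

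Finally, I would close the proof by invoking the Rayleigh-product DMT calculation from the appendix, exactly as in Theorem \ref{Theorem:Theorem3}. Defining $\gamma_k\triangleq-\log\lambda_{t,k}/\log\rho$ and the outage event $\mathcal{E}_{\gamma}\triangleq\{\gamma_{N_S}>(1-2r/N_S)^+\}$, the DMT is $d(r)=\inf_{\mathbf{c}\in\mathcal{E}_{\gamma},\,\gamma_k>0}\theta(\cv)$; splitting into the cases $N_R\leq N_D$ and $N_D<N_R$ and using $\theta_2$ and $\theta_3$ respectively, the infimum is attained at $\gamma_1=\cdots=\gamma_{N_S-1}=0$, $\gamma_{N_S}=1$, giving $(N_R-N_S+1)(1-2r/N_S)^+$ and $(N_D-N_S+1)(1-2r/N_S)^+$, which combine into $d_{\text{N-ME}}(r)$. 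The main obstacle I anticipate is not the algebra, which essentially recycles Theorem \ref{Theorem:Theorem3}, but the careful justification that dropping the $+1$ inside the logarithm genuinely removes the $r>0$ restriction: I would need to verify that no larger eigenvalues $\lambda_{t,k}$ for $k<N_S$ contribute additional diversity at $r=0$, which follows because the event $\{\gamma_{N_S}>1\}$ already forces all $\gamma_k$ with $k<N_S$ to be unconstrained in the infimum, yielding the stated exponent uniformly in $r\geq 0$.
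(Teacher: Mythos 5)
Your proposal is correct and follows essentially the same route as the paper: reduce the naive relay to a Rayleigh product channel with $\Rv_e\doteq(c\Hv^H\Gv^H\Gv\Hv)^{-1}$, show the Jensen-based joint-encoding lower bound and the separate-encoding upper bound both collapse to the event $\{1/(\rho\lambda_{t,N_S})>\rho^{-2r/N_S}\}$, observe that the unbounded summands $1/(\rho\lambda_{t,k})$ make this exponentially tight even at $r=0$ (the precise point the paper emphasizes in the passage from (\ref{eq:naive ZF LB1}) to (\ref{eq:naive ZF LB2})), and then recycle the $\theta_2$/$\theta_3$ infimum computation from Theorem \ref{Theorem:Theorem3}. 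The paper treats this theorem as an immediate corollary of that preceding discussion, so your write-up is if anything slightly more explicit than the original on the separate-encoding side.
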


The DMT analysis of the naive schemes above provides useful insights on the AF relaying systems.
First, it is seen from Theorem \ref{Theorem:Theorem3} and \ref{Theorem:Theorem4}
that when the number of relay antennas is small such that $N_R\leq N_D$,
the naive schemes achieve the same DMT as the corresponding transceivers studied in Theorem \ref{Theorem:Theorem1} and \ref{Theorem:Theorem2}.
In this case, therefore, knowing the CSI at the relay seems to be insignificant.
However, as $N_R$ grows larger than $N_D$, we observe that the DMT of the naive schemes are always inferior to that of
the linear transceiving schemes.
This is due to the fact that the naive schemes do not fully exploit the transmit diversity
offered by the second-hop MIMO channel due to lack of the CSI at the relay\footnote{
It has been shown in \cite{SGharan:09} and \cite{RPedarsani:10} that the {\it ``random sequential''} scheme may also improve the DMT performance of the naive scheme
without needing the CSI at the relay. However, this scheme randomly changes the effective channel across the transmission block,
which amounts to the fast fading channels that require a continuous channel estimate at the destination; thus, is beyond the scope of this paper.}.
In other words, when $N_R>N_D$, a proper relay matrix design using the CSI is essential to obtain the DMT (\ref{eq:Theorem1}).

Similar to the MMSE transceiver, the DMT of the naive-MMSE only holds for positive multiplexing gain $r>0$.
In particular, when the rate is fixed and not too large, it is seen that very significant performance advantage is achieved by the joint encoding scheme.
Unlike the MMSE transceiver, however, we should remark that
the full-diversity order may not be achievable with the naive-MMSE scheme no matter how small the rate is.
This statement will be demonstrated through the DRT analysis in the subsequent section.

\section{Diversity-Rate Tradeoff Analysis with Joint Encoding}\label{sec:DRT}

In this section, we aim to characterize the diversity order of the MMSE schemes as a function of the finite spectral efficiency $R$ (b/s/Hz).
In fact, the DMT analysis accurately predicts the diversity order for the positive multiplexing gain $r>0$.
However, when the rate is fixed and small,
the MMSE with the joint encoding scheme exhibits the performance in stark contrast to one predicted by the DMT
(similar observation has been made in P2P channels \cite{Hedayat:07} \cite{Mehana:12} \cite{Kumar:09}).
In particular, the MMSE transceiver shows the full-diversity behavior when the rate is sufficiently low.
In what follows, we characterize the rate-dependent behavior of the MMSE schemes in AF relaying channels
through the DRT analysis which yields the tight upper and lower bounds of the diversity order as a function of the target rate $R$
and the number of antennas at each node\footnote{Note that tight bounds of the DRT with the separate encoding scheme are still open.
However, extensive computer simulations demonstrate that the MMSE schemes with
the separate encoding behaves as predicted by the DMT with $r=0$ for entire range of $R$}.

\begin{Theorem}\label{Theorem:Theorem5}
{\it For the fixed spectral efficiency $R$,
the DRT of the MMSE transceiver over $N_S\times N_R\times N_d$ MIMO AF half-duplex relaying channels,
constrained to use the joint encoding scheme, is given by
\bea\label{eq:Theorem5}
D_{\text{ME}}\left(\left\lceil (m)^+ \right\rceil\right)\leq d_{\text{ME}}(R)\leq D_{\text{ME}}\left(\left\lfloor \big(m+1)^+ \right\rfloor\right),
\eea
where $m\triangleq N_S 2^{-\frac{2R}{N_S}}+M-N_S$ and $D_{\text{ME}}(i)\triangleq\min\big(i(N_R+N_S-2M+i),(N_R-M+i)(N_D-M+i)^+\big)$,}
\end{Theorem}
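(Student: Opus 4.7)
The plan is to extend the DMT framework of Theorem \ref{Theorem:Theorem1} to the finite-rate regime. The essential change is that with $R$ fixed, the asymptotic outage threshold $2^{-2R/N_S}$ is a positive constant rather than the vanishing $\rho^{-2r/N_S}$ of the DMT analysis. Consequently, summands of the form $1/(1+\rho\lambda_{h,k})$ that were bounded above by $1$ and thus exponentially negligible in the DMT are now comparable to the threshold and must be carefully counted.

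I would first recycle the Jensen-based derivation of (\ref{eq:Jensen})--(\ref{eq:Si}), now with $R(\rho)=R$ constant, together with the product identity $\prod_i(1+\tau_i)=\rho^{N_S}/\prod_i[\Rv_e]_{i,i}$ and the near-equality $[\Rv_e]_{i,i}/\rho\approx T/N_S$ that holds on the event $\mathcal{A}$ where $|u_{k,i}|^2\approx 1/N_S$ for all $k$. This yields matching upper and lower bounds on $P^{\text{JE}}_{\text{out}}$ of the form
\bea
P^{\text{JE}}_{\text{out}}\doteq P\bigg(\sum_{k=1}^{N_S}\frac{1}{1+\rho\lambda_{h,k}} + \sum_{k=1}^{M}\frac{1}{N_S\rho\lambda_{y,k}^{-1}(1+\rho\lambda_{g,k})}>c\bigg),\nonumber
\eea
where the constant $c$ matches $N_S 2^{-2R/N_S}$ up to bounded multiplicative factors that differ between the upper and lower bounds. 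Using the reparametrization of (\ref{eq:alpha beta}) together with $\rho\lambda_{y,k}^{-1}=1+\rho^{\alpha_k-1}$ and $|\hat\phi_k|^2\leq N_S\rho\lambda_{y,k}^{-1}$, each first-hop term is asymptotically $\mathbf{1}[\alpha_k\geq 1]$, and each (scaled) second-hop term is asymptotically $(1/N_S)\mathbf{1}[\alpha_k\leq 1,\,\beta_k\geq 1]$.

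The outage condition then reduces to a counting event: the total number of indices $k$ for which the corresponding summand is of order $\rho^0$ must exceed $m$. The offset $M-N_S$ in the definition of $m$ arises because, when $N_S>M$, the $N_S-M$ first-hop-only terms (indices $k>M$) always contribute whenever $\alpha_k\geq 1$, effectively shifting the required count by $-(N_S-M)$. Since this count is integer-valued, the effective outage threshold is $\lceil m^+\rceil$ on one side and $\lfloor (m+1)^+\rfloor$ on the other, the two coinciding except where $m$ is itself an integer. Applying the Zheng--Tse-type Laplace analysis to the joint density of ordered eigenvalues of the independent Wishart matrices $\Hv^H\Hv$ and $\Gv^H\Gv$, the infimum of the density exponent subject to having at least $i$ small eigenvalues splits into two candidate configurations: all $i$ contributions are supplied by the first hop, yielding $i(N_R+N_S-2M+i)$; or all are supplied by the second hop, yielding $(N_R-M+i)(N_D-M+i)^+$. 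The minimum of these two is exactly $D_{\text{ME}}(i)$.

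The main technical obstacle is the coupling between $\alpha_k$ and $\beta_k$ within each second-hop summand via $\rho\lambda_{y,k}^{-1}$: forcing $\alpha_k\geq 1$ simultaneously activates the first-hop indicator and deactivates the second-hop one at the same index, so the Lagrangian optimization over $(\alpha_k,\beta_k)$ must decide, for each candidate outage index, which hop supplies the contribution. Handling this coupling cleanly is what produces the two-branch $\min(\cdot,\cdot)$ structure in $D_{\text{ME}}(i)$. A secondary complication is the integer-rounding discontinuity inherent in counting, which precludes exact tightness at rate points where $m$ is itself an integer---these are precisely the discontinuities alluded to in the introduction.
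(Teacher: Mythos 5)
Your proposal is correct and follows essentially the same route as the paper's proof: Jensen-type bounds on the joint-encoding outage in both directions, reduction of the fixed-rate outage event to counting indicator-valued summands (exploiting the mutual exclusivity of the first- and second-hop indicators at each index and the $M-N_S$ offset from the deterministic rank-deficient first-hop terms), and a Varadhan/Zheng--Tse exponent computation over the two pure configurations that yields the two-branch minimum in $D_{\text{ME}}(i)$. The paper likewise attributes the $\lceil (m)^+\rceil$ versus $\lfloor (m+1)^+\rfloor$ gap to the integer-counting discontinuity at rates where $N_S 2^{-2R/N_S}$ is an integer, handled via the $\epsilon$-perturbed threshold $m_\epsilon$.
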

\begin{IEEEproof}
{\textbf{(1) DRT Lower-bound}}:
Consider the MMSE transceiver with the joint encoding scheme. Since $-\log(\cdot)$ is convex,
applying the Jensen's inequality and setting the target rate as $R$, the outage probability of (\ref{eq:P_JE}) is upper-bounded by
\bea\label{eq:JE-MI-LB}
P_{\text{out}}^{\text{JE}}&\leq&P\left(-\frac{N_S}{2}\log\Big(\frac{1}{\rho N_S}\text{Tr}(\Rv_e)\Big)<R\right)\\
&\leq&P\bigg(-\frac{N_S}{2}\log\Big(\frac{1}{N_S}\big(\text{Tr}(\rho\mathbf{\Lambda}_h+\Iv_{N_S})^{-1}\nonumber\\
&&+~~\text{Tr}\big(\rho\widetilde{\mathbf{\Lambda}}_g+\rho\widetilde{\mathbf{\Lambda}}_{y}^{-1}\big)^{-1}\big)\Big)<R\bigg)\nonumber.
\eea
Note that the last bound is obtained by setting $\hat{\mathbf{\Phi}}=\Iv_{M}$ as in (\ref{eq:eta=1}).
Then, by employing the definitions of $\alpha_k$ and $\beta_k$ in (\ref{eq:alpha beta}), it follows
\bea
P_{\text{out}}^{\text{JE}}\leq P\bigg(\sum_{k=1}^{N_S}\frac{1}{1+\rho\lambda_{h,k}}+\sum_{k=1}^{M}\frac{1}{\rho\lambda_{g,k}+\rho\lambda_{y,k}^{-1}}>N_S 2^{-\frac{2R}{N_S}}\bigg)\nonumber\\
=P\bigg(\sum_{k=1}^{M}\frac{1}{1+\rho\lambda_{h,k}}+\sum_{k=1}^{M}\frac{1}{\rho\lambda_{g,k}+\rho\lambda_{y,k}^{-1}}>m\bigg)~~~~~~~~~~~~\nonumber\\
\doteq P\bigg(\sum_{k=1}^{M}\frac{1}{1+\rho^{1-\alpha_k}}+\sum_{k=1}^{M}\frac{1}{1+\rho^{1-\beta_k}+\rho^{\alpha_k-1}}> (m)^+\bigg),\nonumber
\eea
\vspace{-20pt}
\bea\label{eq:m^+}
~~
\eea
where (\ref{eq:m^+}) is due to the fact that the outage exponent will converge to $0$ for all $m\leq0$.

Asymptotically, the following exponential equalities hold:
\bea\label{eq:alpha_k}
\frac{1}{1+\rho^{1-\alpha_k}}\doteq\left\{\begin{array}{cc} 1 & \text{if}~\alpha_k>1 \\
0 & \text{if}~\alpha_k<1 \end{array}\right.~~~~~~~~~~~~~~~\\
\label{eq:beta_k}
\frac{1}{1+\rho^{1-\beta_k}+\rho^{\alpha_k-1}}\doteq\left\{\begin{array}{cc} 1 & \text{if}~\alpha_k<1~\text{and}~\beta_k>1 \\
0 & \text{if}~\alpha_k>1~\text{or}~\beta_k<1 \end{array}\right.,
\eea
for $k=1,\ldots,M$, which implies that in order for the outage to occur,
at least $\overline{m}\triangleq\lceil m^+\rceil$ number of terms in (\ref{eq:m^+}) should be $1$ among $2M$ summation terms.
It is important to note that (\ref{eq:alpha_k}) and (\ref{eq:beta_k}) cannot simultaneously be $1$ at the same $k$,
which is a key feature of the MMSE transceiver enabling us to achieve
the full diversity order for sufficiently small rate $R$.

\begin{figure*} [!htp]
\begin{center}
\includegraphics[width=6.3in]{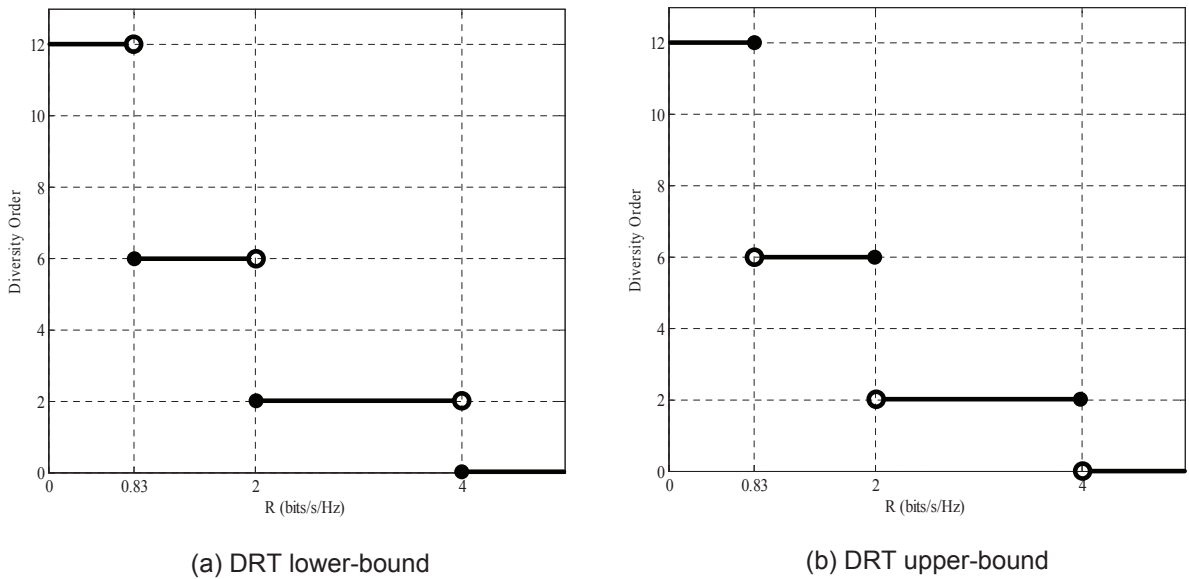}
\end{center}
\caption{DRT lower and upper bounds of the MMSE transceiver in $4\times4\times3$ MIMO AF relaying systems \label{figure:draw_diversity.eps}}
\end{figure*}

Recall that all eigenvalues are arranged in descending order, which means that
$\{\alpha_i\}$ and $\{\beta_i\}$ are ordered according to $\alpha_1\leq\cdots\leq\alpha_M$ and $\beta_1\leq\cdots\leq\beta_M$.
For example, if $\alpha_1>1$, the term in (\ref{eq:beta_k}) converges to zero for all $k$, regardless of $\beta$.
Using this property, we can define the event $\mathcal{X}_i$ in which the summation in the left-hand side of (\ref{eq:m^+}) equals $i$ as
\bea
\mathcal{X}_i=\mathcal{E}_{h,i}\cup\mathcal{E}_{g,i,0}\cup\mathcal{E}_{g,i,1}\cup\cdots\cup\mathcal{E}_{g,i,i-1}\nonumber
\eea
for $i=1,\ldots,M$, where $\mathcal{E}_{h,i}\triangleq\!\big\{\alpha_{M-i+1}>1>\alpha_{M-i}\big\}$ and
$\mathcal{E}_{g,i,j}\triangleq\!\big\{\beta_{M-i+1}>1>\beta_{M-i}\big\}\cap\mathcal{E}_{h,j}$ for $j=0,1,\ldots,i-1$.
Then, from the union bound, we have
\bea\label{eq:Union to summation}
P_{\text{out}}^{\text{JE}}&\dot{\leq}&P\left(\bigcup_{i=\overline{m}}^M\mathcal{X}_i\right)\nonumber\\
&\leq&\sum_{i=\overline{m}}^M\Big(P(\mathcal{E}_{h,i})+\sum_{j=0}^{i-1}P(\mathcal{E}_{g,i,j})\Big),
\eea

First, we define $P(\mathcal{E}_{h,i})\doteq\rho^{-d_{h,i}(R)}$, $i=1,\ldots,M$. Then,
applying Varadhan's lemma \cite{Kumar:09} \cite{Zheng:03} by using the asymptotic p.d.f.\footnote{The p.d.f. is slightly different from \cite{Kumar:09}, since the eigenvalue ordering is reversed.}
of the random vector $\av=[\alpha_1,\ldots,\alpha_M]$ as
\bea\label{eq:pdf a}
f(\av)\doteq\Big[\prod_{l=1}^M\rho^{-(N_S+N_R-2l+1)\alpha_l}\Big]\text{exp}\Big(-\sum_{l=1}^M\rho^{-\alpha_l}\Big),
\eea
we obtain
\bea\label{eq:d_ih(R)}
d_{h,i}(R)&=&\inf_{\av\in\mathcal{E}_{h,i},\forall\alpha_l>0}\sum_{l=1}^{M}(N_S+N_R-2l+1)\alpha_l\nonumber\\
&=&\sum_{l=1}^{M-i}(N_S+N_R-2l+1)\times0\nonumber\\
&&+\sum_{l=M-i+1}^{M}(N_S+N_R-2l+1)\times1\nonumber\\
&=&i(N_R+N_S-2M+i).
\eea

Now, let us examine the probability of the event $\mathcal{E}_{g,i,j}$, i.e., $P(\mathcal{E}_{g,i,j})\doteq\rho^{-d_{g,i,j}(R)}$.
Defining $L\triangleq\min(N_R,N_D)$, the p.d.f. of the random vector $\bv=[\beta_1,\ldots,\beta_L]$ is given by
\bea\label{eq:pdf b}
f(\bv)\doteq\Big[\prod_{l=1}^L\rho^{-(N_R+N_D-2l+1)\beta_i}\Big]\text{exp}\Big(-\sum_{l=1}^L\rho^{-\beta_l}\Big).
\eea
Then, the probability of the event $\mathcal{E}_{g,i,j}$ is
\bea
P(\mathcal{E}_{g,i,j})=\int_{\mathcal{E}_{g,i,j}}f(\av,\bv)d\av d\bv~~~~~~~~~~~~~~~~~~~~~~~~~~~~~~~~\nonumber\\
\doteq\int_{\mathcal{E}_{g,i,j}}\left[\rho^{-\sum_{l=1}^M (N_S+N_R-2l+1)\alpha_l-\sum_{l=1}^L (N_R+N_D-2l+1)\beta_l}\right]\nonumber\\
\times\text{exp}\Big(-\sum_{l=1}^M\rho^{-\alpha_l}-\sum_{l=1}^L\rho^{-\beta_l}\Big)d\av d\bv,\nonumber
\eea
due to the independence of $\av$ and $\bv$, and applying Varadhan's lemma again, we have
\bea
d_{g,i,j}(R)~~~~~~~~~~~~~~~~~~~~~~~~~~~~~~~~~~~~~~~~~~~~~~~~~~~~~~~~~~~~\nonumber\\
=\inf_{(\mathbf{a},\mathbf{b})\in\mathcal{E}_{g,i,j},\forall\alpha_l,\forall\beta_l>0}\sum_{l=1}^{M}(N_S+N_R-2l+1)\alpha_l~~~~~~~~~~~~~\nonumber\\
+~\sum_{l=1}^{L}(N_R+N_D-2l+1)\beta_l~~~~\nonumber\\
=\!\!\!\!\!\!\sum_{l=M-j+1}^{M}\!\!(N_S+N_R-2l+1)+\!\!\!\!\!\!\sum_{l=M-i+1}^{L}(N_R+N_D-2l+1)\nonumber\\
=j(N_S+N_R-2M+j)~~~~~~~~~~~~~~~~~~~~~~~~~~~~~~~~~~~~~~~\nonumber\\
+~~(N_R+N_D-L-M+i)(L-M+i)^+~~\nonumber\\
=j(N_S+N_R-2M+j)\!+\!(N_R-M+i)(N_D-M+i)^+.\nonumber
\eea
\vspace{-20pt}
\bea\label{eq:d_ig(R)}
~~
\eea
Finally, we observe from (\ref{eq:d_ih(R)}) and (\ref{eq:d_ig(R)}) that
all outage events in (\ref{eq:Union to summation}) yield higher outage exponents
than $\mathcal{E}_{h,\overline{m}}$ or $\mathcal{E}_{g,\overline{m},0}$.
Therefore, we eventually conclude that
\bea
~~P_{\text{out}}^{\text{JE}}~\dot{\leq}~ P(\mathcal{E}_{h,\overline{m}})+P(\mathcal{E}_{g,\overline{m},0})\doteq\rho^{-\min(d_{h,\overline{m}}(R),d_{g,\overline{m},0}(R))},\nonumber
\eea
and the proof of DRT lower-bound is established.

{\textbf{ (2) DRT Upper-bound}}:
We start from the lower-bound of $P_{\text{out}}^{\text{JE}}$ defined in (\ref{eq:Si}). For a fixed rate $R$,
it can be rephrased by
\bea
P_{\text{out}}^{\text{JE}}~\dot{\geq}~ P\bigg(\sum_{k=1}^{M}\frac{1}{1+\rho\lambda_{h,k}}
+\sum_{k=1}^{M}\frac{1}{\rho\lambda_{y,k}^{-1}(1+\rho\lambda_{g,k})}>m_{\epsilon}\bigg)\nonumber\\
\doteq P\bigg(\sum_{k=1}^{M}\frac{1}{1+\rho^{1-\alpha_k}}~~~~~~~~~~~~~~~~~~~~~~~~~~~~~~~~~~~~~~\nonumber\\
~~+~\sum_{k=1}^{M}\frac{1}{\rho^{1-\beta_k}+\rho^{-(1-\alpha_k)}+\rho^{\alpha_k-\beta_k}}>(m_{\epsilon})^+\bigg),\nonumber
\eea
\vspace{-20pt}
\bea\label{eq:1-epsilon}
~~
\eea
where $m_{\epsilon}\triangleq\frac{N_S}{1-\epsilon}2^{-\frac{2R}{N_S}}+M-N_S$.
Asymptotically, the following equality holds:
\bea
\frac{1}{\rho^{1-\beta_k}+\rho^{-(1-\alpha_k)}+\rho^{\alpha_k-\beta_k}}\doteq\left\{\begin{array}{cc}\!\! 1 &\!\! \text{if}~\alpha_k<1~\text{and}~\beta_k>1 \\
\!\!0 & \!\!\text{if}~\alpha_k>1~\text{or}~\beta_k<1 \end{array}\right.\nonumber
\eea
\vspace{-15pt}
\bea\label{eq:alpha_k-beta_k}
~~
\eea
for $k=1,\ldots,M$, which is exponentially equivalent to (\ref{eq:beta_k}).
Therefore, the remaining proof simply follows the previously studied DRT lower-bound by replacing $m$ with $m_{\epsilon}$.
Finally, we have
\bea
P_{\text{out}}^{\text{JE}}~\dot{\geq}~ P(\mathcal{E}_{h,\overline{m}_{\epsilon}})+P(\mathcal{E}_{g,\overline{m}_{\epsilon},0})\doteq\rho^{-\min(d_{h,\overline{m}_{\epsilon}}(R),d_{g,\overline{m}_{\epsilon},0}(R))},\nonumber
\eea
where $\overline{m}_{\epsilon}\triangleq\lceil(m_{\epsilon})^+\rceil$.
As long as $N_S2^{-\frac{2R}{N_S}}\notin\mathbb{N}$ is non-integer, the constant $\epsilon$ can be chosen such that
$\overline{m}_{\epsilon}=\overline{m}$. Therefore, the upper and lower bounds are tight.
However, when $N_S2^{-\frac{2R}{N_S}}\in\mathbb{N}$ takes an integer value,
the outage exponent obeys a slightly weaker upper-bound with $\overline{m}_{\epsilon}=\lfloor(m+1)^+\rfloor$, and the proof is concluded.
\end{IEEEproof}

Our result in Theorem \ref{Theorem:Theorem5} confirms and complements the earlier work on DMT in Theorem \ref{Theorem:Theorem1}.
We first see that when the rate is high, i.e., $R>\frac{N_S}{2}\log N_S$ or $\lceil(m)^+\rceil=\lfloor(m+1)^+\rfloor=1$,
both Theorem \ref{Theorem:Theorem1} and \ref{Theorem:Theorem5} yield the same diversity.
At high rate, therefore, the diversity order of the MMSE transceivers may be predictable by DMT analysis with setting $r=0$,
and thus very suboptimal compared to the ML diversity (\ref{eq:optimal DMT}).
However, as the rate becomes lower,
it is shown from Theorem \ref{Theorem:Theorem2} that higher diversity order is actually achievable than one predicted by the DMT.
In particular, when $R<\frac{N_S}{2}\log\frac{N_S}{N_S-1}$ or $\lceil(m)^+\rceil=\lfloor(m+1)^+\rfloor=M$,
the MMSE transceivers even exhibit the ML-like performance with full diversity order $d^*(r=0)=N_R\min(N_S,N_D)$.
It is especially interesting to observe that when the rate is sufficiently small,
a certain diversity gain is still achievable even when $N_S>\min(N_R,N_D)$,
which is often overlooked in conventional works for MMSE-based MIMO relaying systems.

A careful examination of the bounds in (\ref{eq:Theorem5}) reveals that the upper-bound is left-continuous while the lower-bound is right-continuous at the
discontinuity points. To help readers understand better, we take an example in Figure \ref{figure:draw_diversity.eps} which
shows the DRT performance of the MMSE transceiver in $4\times4\times3$ MIMO AF relaying channels.
As seen, two bounds in (\ref{eq:Theorem5}) are very tight against each other except its discrepant points.
It is also confirmed from the figure that various diversity gains, up to the full diversity order, are achievable by adjusting the transmit rate.
As shown in the following, however, this may not be the case when the naive-MMSE scheme is adopted.

\begin{Theorem}\label{Theorem:Theorem6}
{\it Define $(X,Y,Z)$ be the ordered version of $(N_S,N_R,N_D)$ with $N\leq Y\leq Z$.
Then, for the fixed spectral efficiency $R$,
the DRT of the naive-MMSE over $N_S\times N_R\times N_d$ MIMO AF half-duplex relaying channels,
constrained to use the joint encoding scheme, is given by
\bea\label{eq:Theorem6}
D_{\text{N-ME}}\left(\left\lceil (n)^+ \right\rceil\right)\leq d_{\text{N-ME}}(R)\leq D_{\text{N-ME}}\left(\left\lfloor \big(n+1)^+ \right\rfloor\right),
\eea
where $n\triangleq N_S 2^{-\frac{2R}{N_S}}+N-N_S$ and
\bea
D_{\text{N-ME}}(i)\triangleq i(Y-N+i)-\left\lfloor\frac{\big[(i-(Z-Y))^+\big]^2}{4}\right\rfloor.\nonumber
\eea
}
\end{Theorem}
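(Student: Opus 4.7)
The plan is to follow the two-stage strategy of the proof of Theorem \ref{Theorem:Theorem5}, but with the end-to-end channel replaced by the Rayleigh product $\Hv^H\Gv^H\Gv\Hv$ of rank $N=\min(N_S,N_R,N_D)$. As argued before (\ref{eq:naive LB}), under the naive-MMSE the relay gain satisfies $\delta\doteq c$ and the amplified relay noise is exponentially negligible, so $\Rv_e\doteq(\Hv^H\Gv^H\Gv\Hv+\rho^{-1}\Iv_{N_S})^{-1}$. The $N_S-N$ zero eigenvalues of $\Hv^H\Gv^H\Gv\Hv$ each contribute an additive $\rho$ to $\text{Tr}(\Rv_e)$, so applying Jensen's inequality to (\ref{eq:P_JE}) in the style of (\ref{eq:JE-MI-LB}) and rearranging yields $P_{\text{out}}^{\text{JE}}\,\dot{\leq}\,P\!\left(\sum_{k=1}^{N}(1+\rho^{1-\gamma_k})^{-1}>n\right)$, where $\gamma_k\triangleq-\log\lambda_{t,k}/\log\rho$ and $n=N_S 2^{-\frac{2R}{N_S}}+N-N_S$ is exactly the quantity defined in the theorem.

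Invoking the asymptotic dichotomy of (\ref{eq:alpha_k}), each summand tends to $1$ if $\gamma_k>1$ and to $0$ otherwise, so the outage event demands at least $\overline{n}\triangleq\lceil n^+\rceil$ of the ordered exponents $\gamma_1\leq\cdots\leq\gamma_N$ to exceed $1$. Defining the events $\mathcal{X}_i\triangleq\{\gamma_{N-i+1}>1>\gamma_{N-i}\}$ and applying the union bound reduces the task to evaluating $P(\mathcal{X}_i)\doteq\rho^{-D_{\text{N-ME}}(i)}$ for each $i\geq\overline{n}$. To do this I will invoke the ordered eigenvalue p.d.f. of the Rayleigh product channel summarized in Appendix \ref{Appendix:Rayleigh Product} together with Varadhan's lemma. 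After sorting $(N_S,N_R,N_D)$ as $(X,Y,Z)$ with $X=N$, the infimum of the p.d.f. exponent over $\mathcal{X}_i$ is attained at the boundary corner $\gamma_{N-i+1}=\cdots=\gamma_N=1$ and $\gamma_1=\cdots=\gamma_{N-i}=0$. Plugging in this corner produces the main term $i(Y-N+i)$, while the subtractive floor term $\lfloor[(i-(Z-Y))^+]^2/4\rfloor$ emerges from the coupled cross-terms of the product density that become active only once the number of large exponents exceeds the dimension gap $Z-Y$.

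For the matching diversity upper bound I would use a lower bound on $P_{\text{out}}^{\text{JE}}$ analogous to (\ref{eq:Si}) applied to the product channel. Conditioning on the probability-$\doteq\rho^0$ event $\mathcal{A}$ and applying the reverse Jensen inequality gives $P_{\text{out}}^{\text{JE}}\,\dot{\geq}\,P\!\left(\sum_{k=1}^{N}(1+\rho^{1-\gamma_k})^{-1}>n_\epsilon\right)$ with $n_\epsilon=\frac{N_S}{1-\epsilon}2^{-\frac{2R}{N_S}}+N-N_S$. The same count-of-large-exponents analysis then delivers the matching exponent $D_{\text{N-ME}}(\overline{n}_\epsilon)$. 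Whenever $N_S 2^{-\frac{2R}{N_S}}\notin\mathbb{N}$, $\epsilon$ can be tuned so that $\overline{n}_\epsilon=\overline{n}$ and the two bounds coincide; at integer rate crossings one falls back to the weaker $\lfloor(n+1)^+\rfloor$, reproducing (\ref{eq:Theorem6}) exactly.

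The principal obstacle will be the constrained minimization of the Rayleigh product p.d.f. exponent in Step 3. Unlike the MMSE transceiver analysis of Theorem \ref{Theorem:Theorem5}, where the joint density factored cleanly into independent $\alpha$- and $\beta$-parts, the product density in Appendix \ref{Appendix:Rayleigh Product} is piecewise and contains a genuine coupling regime parameterized by the gap $Z-Y$. Correctly identifying which branch of the density dominates for each event $\mathcal{X}_i$, and checking that the optimum lives at the integer corner rather than in the interior of the feasible region, is precisely what produces the floor function $\lfloor[(i-(Z-Y))^+]^2/4\rfloor$ and will be the most delicate ingredient in the argument.
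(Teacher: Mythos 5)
Your proposal follows essentially the same route as the paper's proof: the Rayleigh-product approximation of the naive-MMSE error covariance, Jensen's inequality yielding the threshold $n=N_S2^{-2R/N_S}+N-N_S$ after discarding the $N_S-N$ zero eigenvalues, a union bound over events counting how many exponents $\gamma_k$ exceed one, and Varadhan's lemma applied to the piecewise eigenvalue density of Appendix E (whose three branches produce the main term $i(Y-N+i)$ and the floor penalty), with the matching upper bound obtained by the $\epsilon$-perturbed threshold $n_\epsilon$ exactly as in Theorem 5. The only difference is cosmetic: your event indexing $\{\gamma_{N-i+1}>1>\gamma_{N-i}\}$ is actually more consistent with the paper's descending-eigenvalue convention than its own notation, and you spell out the converse step that the paper omits as "an immediate corollary."
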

\begin{IEEEproof}
We prove the theorem by developing the DRT lower-bound of the naive-MMSE with the joint encoding scheme.
As mentioned previously,
the naive relay channel is asymptotically approximated to the Rayleigh product channel.
Thus, applying the similar argument as in (\ref{eq:JE-MI-LB}) and (\ref{eq:m^+}), we can write the outage upper-bound as
\bea
P_{\text{out}}^{\text{JE}}~~~~~~~~~~~~~~~~~~~~~~~~~~~~~~~~~~~~~~~~~~~~~~~~~~~~~~~~~~~~~~~~~~\nonumber\\
\leq P\bigg(-\frac{N_S}{2}\log\Big(\frac{1}{N_S}\text{Tr}[(\rho\Hv^H\Gv^H\Gv\Hv+\Iv_{N_S})^{-1}]\Big)<R\bigg)\nonumber\\
\doteq P\left(\sum_{k=1}^{N}\frac{1}{1+\rho^{1-\gamma_k}}>(n)^+\right),~~~~~~~~~~~~~~~~~~~~~~~~~~~~~~~\nonumber
\eea
where $n\triangleq N_S 2^{-\frac{2R}{N_S}}+N-N_S$.
Let us define events $\mathcal{E}_i=\{\gamma_i>1>\gamma_{i+1}\}$ for all $i$.
Then, for large $\rho$, the following approximation holds:
\bea
~~P\left(\sum_{k=1}^{N}\frac{1}{1+\rho^{1-\gamma_k}}>(n)^+\right)
\approx\bigcup_{i=\overline{n}}^{N}\mathcal{E}_i\leq \sum_{i=\overline{n}}^N P(\mathcal{E}_i),\nonumber
\eea
where $\overline{n}\triangleq\lceil(n)^+\rceil$.

We now define $P(\mathcal{E}_i)\doteq\rho^{-d_i(R)}$ for $i=1,\ldots,N$. Then, using the pdf $f(\cv)\doteq\rho^{-\theta(\cv)}$ of a
random vector $\cv=[\gamma_1,\ldots,\gamma_N]$ given in Appendix \ref{Appendix:Rayleigh Product}, we obtain

\begin{itemize}
\item for $N_R=N$,
\bea
d_i(R)=\inf_{\mathbf{c}\in\mathcal{E}_i,\forall \gamma_k>0}\theta_1(\cv)~~~~~~~~~~~~~~~~~~~~~~~~~~~~~~~~~~~~~~\nonumber\\
=i(\min(N_S,N_D)-N_R+i)-\left\lfloor\frac{\big[(i-|N_S-N_D|)^+\big]^2}{4}\right\rfloor\nonumber
\eea

\item for $N_D\leq N_R\leq N_S$ or $N_S\leq N_R\leq N_D$,
\bea
d_i(R)&=&\inf_{\mathbf{c}\in\mathcal{E}_i,\forall \gamma_k>0}\theta_2(\cv)\nonumber\\
&=&i(N_R-\min(N_S,N_D)+i)\nonumber\\
&&~~-~\left\lfloor\frac{\big[(i-|\max(N_S,N_D)-N_R|)^+\big]^2}{4}\right\rfloor\nonumber
\eea

\item for $N_D\leq N_S< N_R$ or $N_S\leq N_D< N_R$,
\bea
d_i(R)&=&\inf_{\mathbf{c}\in\mathcal{E}_i,\forall \gamma_k>0}\theta_3(\cv)\nonumber\\
&=&i(\max(N_S,N_D)-\min(N_S,N_D)+i)\nonumber\\
&&~~-~\left\lfloor\frac{\big[(i-|N_R-\min(N_S,N_D)|)^+\big]^2}{4}\right\rfloor\nonumber.
\eea
\end{itemize}
Finally, combining the above three, we arrive at $d_i(R)=D_{\text{N-ME}}(i)$. Thus, we can eventually conclude that
$\sum_{i=\overline{n}}^N P(\mathcal{E}_i)\doteq\rho^{-d_{\overline{n}}(R)}$,
and the DRT lower-bound is established.
For the DRT upper-bound, the proof follows as an immediate corollary from the previous results in Theorem \ref{Theorem:Theorem5}; thus is omitted.
\end{IEEEproof}

From Theorem \ref{Theorem:Theorem6}, some remarks can be made about the DRT performance of the naive-MMSE.
First, the upper and lower bounds in (\ref{eq:Theorem6}) are tight except some inconsistent points.
In addition, we observe that the DRT of the naive-MMSE does not depend on the antenna configuration ($N_S,N_R,N_D$), but only depends on its ordered triple ($N,Y,Z$).
Since $D_{\text{N-ME}}(i)$ is an increasing function of $i$, the maximum achievable diversity of the naive-MMSE is given by
\bea
NY-\left\lfloor\frac{\big[(N-Z+Y)^+\big]^2}{4}\right\rfloor,\nonumber
\eea
when $\lceil(n)^+\rceil=\lfloor(n+1)^+\rfloor=N$, i.e., $R<\frac{N_S}{2}\log\left(\frac{N_S}{N_S-1}\right)$.
This result shows that in order for the naive-MMSE to achieve the full-diversity order of MIMO AF relaying channels,
at least following two conditions: $N_R\in\{N,Y\}$ and $N<Z-Y+2$ must be satisfied, and thus
the full-diversity order is not achievable in general with the naive-MMSE.
Meanwhile, when the rate is sufficiently high such that $R>\frac{N_S}{2}\log N_S$, i.e., $\lceil(n)^+\rceil=\lfloor(n+1)^+\rfloor=1$,
the diversity order is the same as one predictable by the DMT in Theorem \ref{Theorem:Theorem3}.

\section{Numerical Results } \label{sec:Numerical Results}

\begin{figure}
\begin{center}
\includegraphics[width=3.7in, height=3.0in]{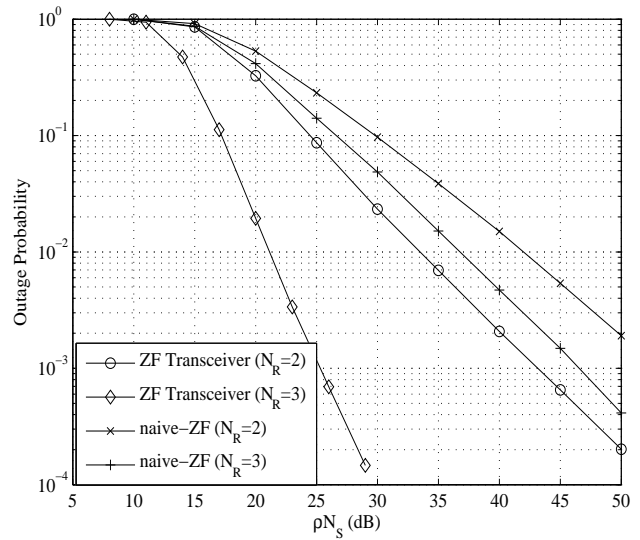}
\end{center}
\caption{Outage probabilities of ZF schemes under the joint encoding with $N_S=N_D=2$ and $R=3.32$ bits/s/Hz \label{figure:ZF2.eps}}
\end{figure}
\begin{figure}
\begin{center}
\includegraphics[width=3.7in, height=3.0in]{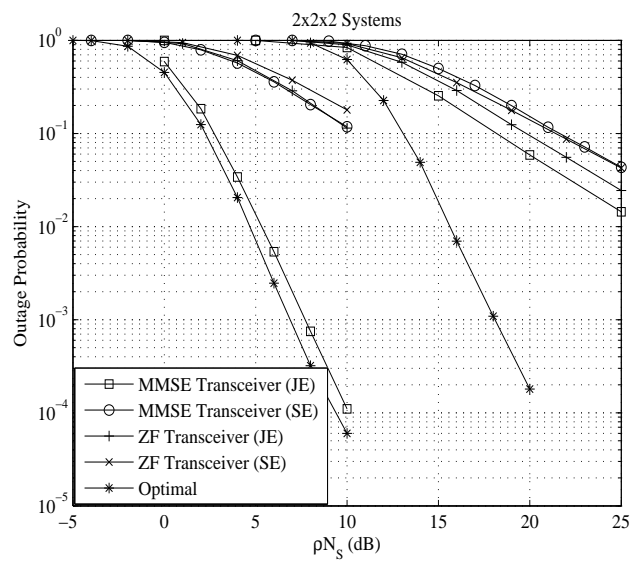}
\end{center}
\caption{Outage probabilities of MMSE and ZF transceivers with $R=0.42$ and $2$ bits/s/Hz (left to right) \label{figure:2x2x2MMSE2.eps}}
\end{figure}

The goal of this section is to demonstrate the accuracy of our analysis and provide several interesting observations through numerical simulations.
For convenience, we denote the joint encoding and the separate encoding schemes by JE and SE, respectively.
Figure \ref{figure:ZF2.eps} compares two ZF schemes, i.e., the ZF transceiver and naive-ZF under the JE.
It is seen that the ZF transceiver not only obtains the power gain about $10$ dB over the naive-ZF,
but also achieves a diversity gain when $N_R>N_D$.
This result shows that a proper relay matrix design assisted by the CSI is indeed important to fully exploit the resources of the MIMO AF relaying systems.
We also confirm from this figure that our DMT analysis in Theorem \ref{Theorem:Theorem2} and \ref{Theorem:Theorem4} accurately predicts the numerical performance of the ZF schemes.

Figure \ref{figure:2x2x2MMSE2.eps} illustrates the case of $2\times2\times2$ systems with $R=0.42$ and $2$ bits/s/Hz.
Here, {\it ``Optimal''} exhibits the outage probability of the maximum MI described in (\ref{eq:max MI}).
As predicted by Theorem \ref{Theorem:Theorem5},
we observe that when the coding is applied jointly across antennas, the MMSE transceiver shows near optimal performance as the rate becomes smaller,
while the ZF transceiver exhibits parallel waterfall error curves regardless of the coding scheme and the code rate as predicted by the DMT in Theorem \ref{Theorem:Theorem2}.
Similar observation can be made in Figure \ref{figure: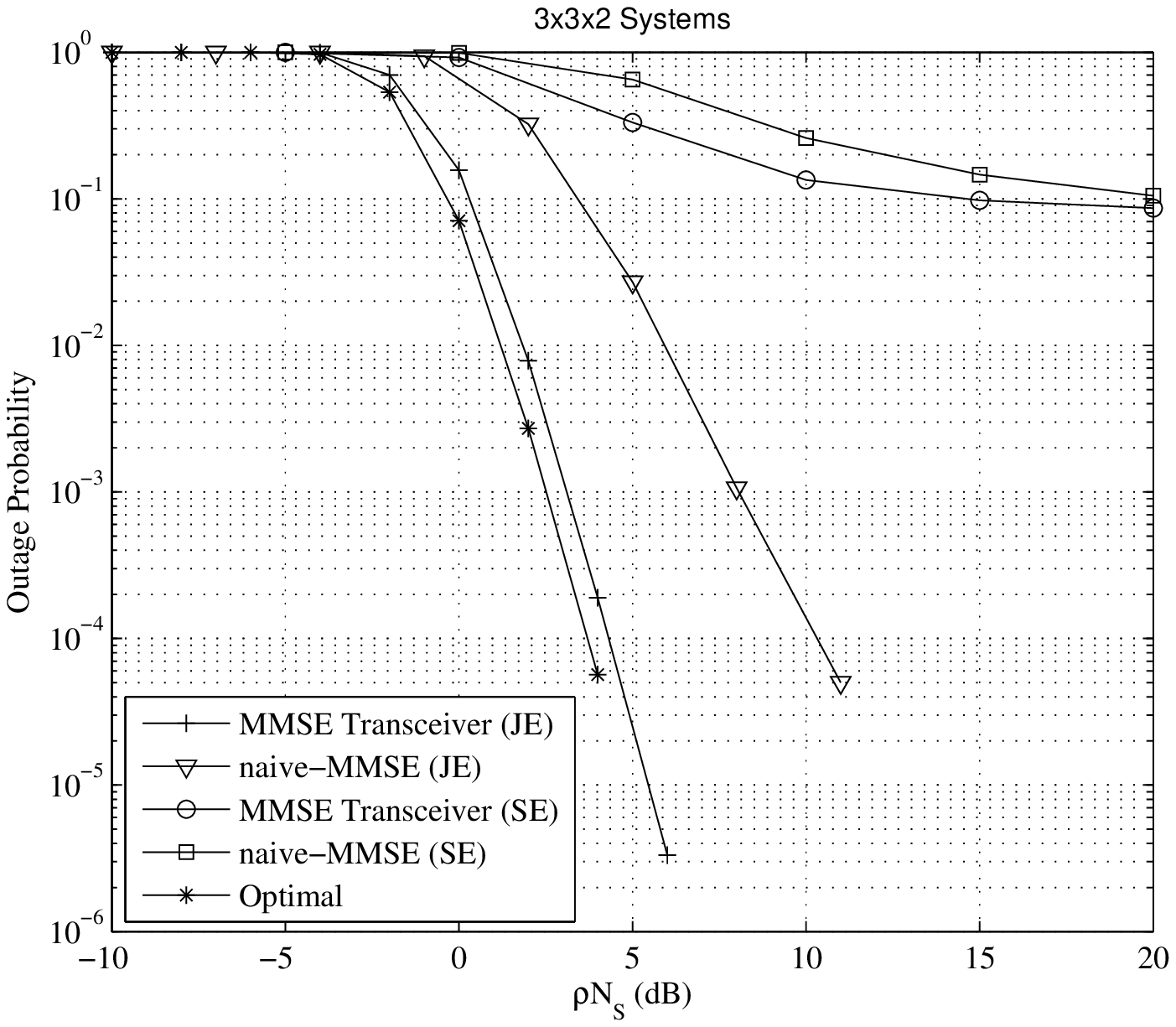}
which compares the outage performance of the naive-MMSE and the MMSE transceiver in $3\times3\times2$ systems with $R=0.39$ bits/s/Hz.
It is shown that although the separate encoding schemes experience the outage floor due to lack of spatial dimension at the destination,
the MMSE schemes with joint encoding scheme enjoy a substantial performance advantage.
In fact, this observation is quite antithetic to the common assumption $N_S\leq\min(N_R,N_D)$
which has usually been adopted in MMSE-based MIMO relaying systems
\cite{Changick:09TWC} \cite{Ronghong:09} \cite{Tseng:10}.
It is also interesting to observe that unlike the MMSE transceiver,
the naive-MMSE does not achieve the full-diversity order, even if the rate is sufficiently small.
This is easily inferred from Theorem \ref{Theorem:Theorem5} and \ref{Theorem:Theorem6}
since the DRT of the naive MMSE exhibits only $d_{\text{N-ME}}(0.39)=5$ due to the penalty term in (\ref{eq:Theorem6}),
while the MMSE transceiver yields the full diversity $d_{\text{ME}}(0.39)=6$.

\begin{figure}
\begin{center}
\includegraphics[width=3.7in, height=3.0in]{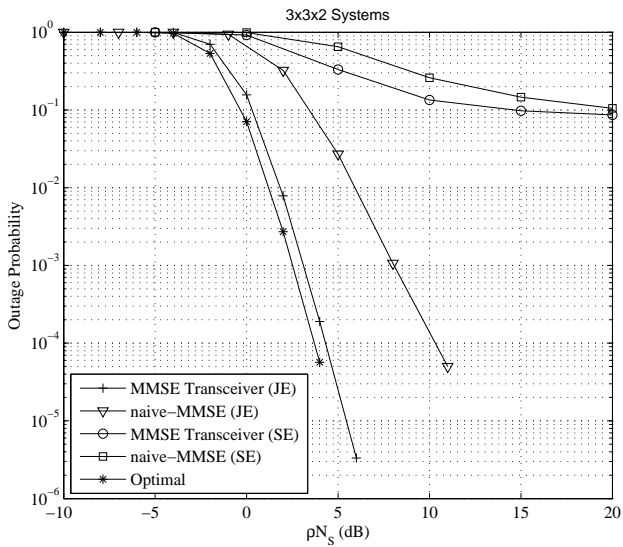}
\end{center}
\caption{Outage probabilities of the MMSE schemes with $R=0.39$ bits/s/Hz \label{figure:3x3x2MMSE.eps}}
\end{figure}

Meanwhile, it is sometimes the case that adding antennas at each node may
be more convenient than insisting on high-complexity receiver processing at the destination \cite{Kumar:09}.
Let us consider the outage curves in Figure \ref{figure:OptimalvsMMSE.eps}.
Suppose that we want to achieve the rate $R=2$ bits/s/Hz at the block error rate $10^{-3}$ with SNR $\rho N_S=18$ dB.
Figure \ref{figure:OptimalvsMMSE.eps} shows that
this target performance is achieved by the $2\times2\times2$ optimal scheme, but obviously not via $2\times2\times2$ MMSE scheme.
One way to improve the performance of the MMSE scheme is to increase the number of antennas at the relay,
since we know that additional antennas at the relay leads to additional DMT advantage (this is not the case in the naive-MMSE).
A big merit of this method is that the target performance can be achieved even with the separate encoding scheme.
If the joint encoding is available, it may also be possible to improve the performance by increasing the data streams
because the rate becomes relatively small in a system with large $N_S$.
For example, it is shown that the MMSE transceiver in $3\times3\times3$ and $4\times4\times4$ systems
attains substantial performance gain over one in $2\times2\times2$ systems.

Finally, the outage probability of the MMSE transceiver is presented in Figure \ref{figure: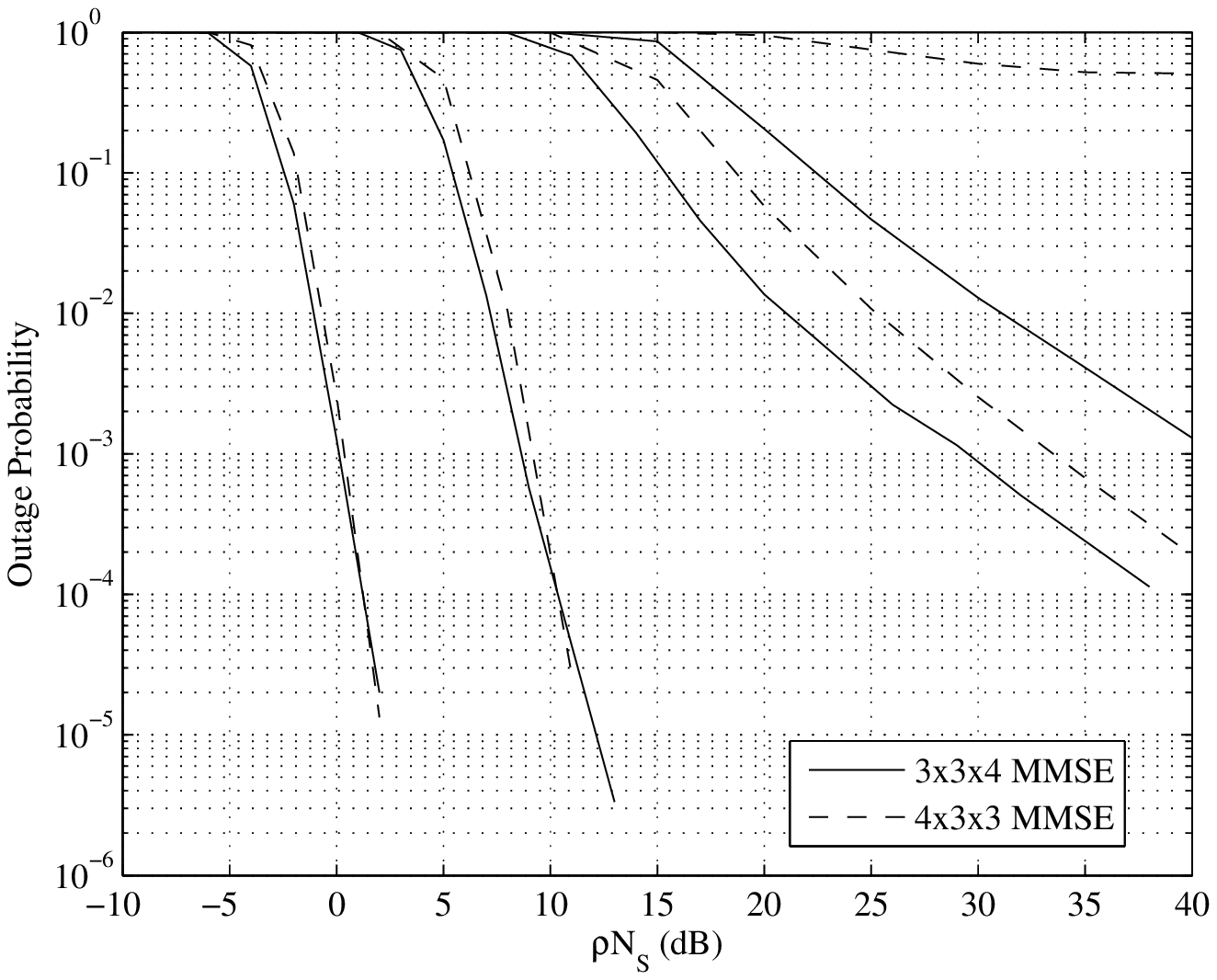} for
$3\times3\times4$ and $4\times3\times3$ systems with various rates.
We see that although the performance of $4\times3\times3$ system may be poor and even floored at high rates since we have $N_S>N_D$,
it shows similar performance to its counter part in $N_S<N_D$ as the rate becomes smaller.
This result implies that when the rate is sufficiently small, increasing the data streams at the based-station
is as effective as increasing the receiver antennas at the destination.

\begin{figure}
\begin{center}
\includegraphics[width=3.7in, height=3.0in]{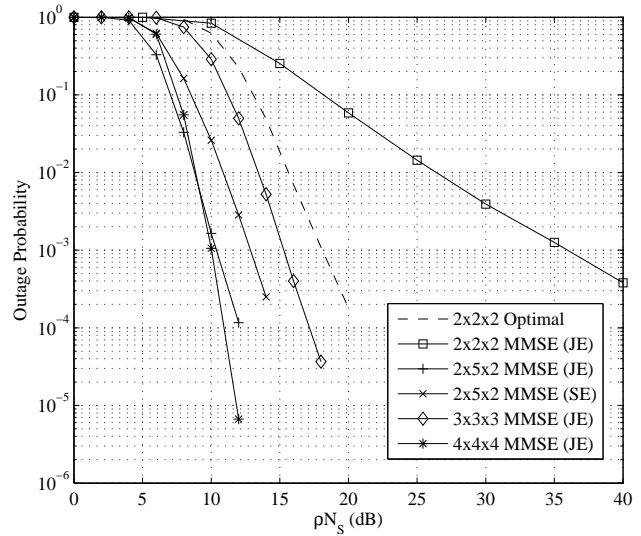}
\end{center}
\caption{Outage probabilities of the optimal and MMSE transceivers with $R=2$ bits/s/Hz \label{figure:OptimalvsMMSE.eps}}
\end{figure}

\begin{figure}
\begin{center}
\includegraphics[width=3.7in, height=3.0in]{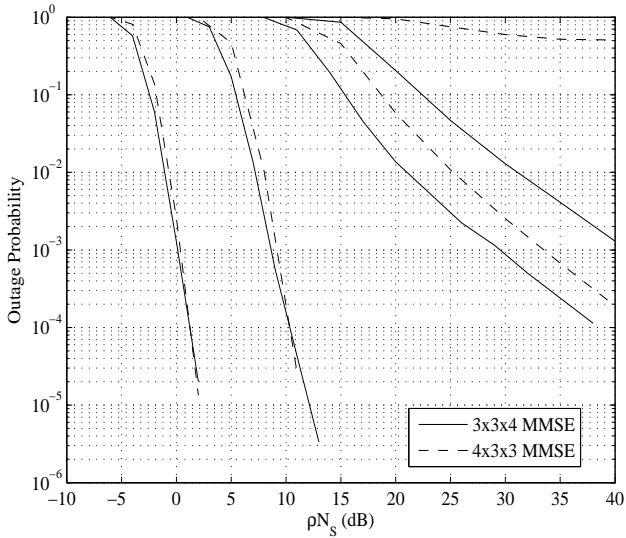}
\end{center}
\caption{Outage probabilities of the MMSE transceiver under the joint encoding with $R=0.3$, $1.2$, $3$, and $5$ bits/s/Hz (left to right) \label{figure:334vs433.eps}}
\end{figure}


\section{Conclusion} \label{sec:Conclusion}

In this paper, we provided comprehensive analysis on the diversity order of the linear transceivers in MIMO AF relaying systems.
We first presented a design framework for the transceiver optimization in terms of both the ZF and the MMSE, and then
provided two types of asymptotic performance analysis.
In the first part of the analysis, we studied the DMT of the ZF and MMSE transceivers.
While the DMT analysis accurately predicts their diversity performance for the positive multiplexing gain,
it was shown that the MMSE transceivers are very unpredictable via DMT when the rate is finite.
In the second part of the analysis, we highlighted this rate-dependent behavior of the
MMSE transceivers and characterized their diversity at all finite rates.
It is especially interesting to observe that the MMSE transceiver exhibits the ML-like performance as the rate becomes smaller,
but the full-diversity order is not guaranteed in the naive-MMSE, even if the rate is arbitrarily small.
Throughout our analysis on both the DMT and the DRT,
we offered complete understanding on the diversity order of the linear transceivers in MIMO AF relaying systems.
Finally, simulation results confirmed our analysis.

\appendices

\section{Proof of Lemma \ref{Lemma:Lemma1}} \label{Appendix:Proof of Lemma1}
In general, one can write $\Qv$ as $\Qv=\Qv_{\|}+\Qv_{\bot}$ \cite{SJang:10} where $\Qv_{\|}$ and $\Qv_{\bot}$ denote the components of
$\Qv$ such that the row space of $\Qv_{\|}$ and $\Qv_{\bot}$ are parallel and orthogonal to the column space of $\Hv$, respectively.
Then, from the definition of the error covariance matrix $\Rv_e$ which is unfolded as
\bea
\Rv_e(\Qv)\triangleq E[(\hat{\xv}-\xv)(\hat{\xv}-\xv)^H]~~~~~~~~~~~~~~~~~~~~~~~~~~~\nonumber\\
=\Wv(\rho\Gv\Qv\Hv\Hv^H\Qv^H\Gv^H+\Gv\Qv\Qv^H\Gv^H+\Iv_{N_D})\Wv^H\nonumber\\
+\rho\Wv\Gv\Qv\Hv+\rho\Hv^H\Qv^H\Gv^H\Wv^H+\rho\Iv_{N_S}\nonumber
\eea
and the relay power consumption $\text{Tr}(\Qv(\rho\Hv\Hv^H+\Iv_{N_R})\Qv^H)$, it is easy to see that
\bea
\Rv_e(\Qv_{\|})&\preceq&\Rv_e(\Qv)\nonumber\\
\text{Tr}(\Qv_{\|}(\rho\Hv\Hv^H+\Iv_{N_R})\Qv_{\|}^H)&\leq&\text{Tr}(\Qv(\rho\Hv\Hv^H+\Iv_{N_R})\Qv^H).\nonumber
\eea
This result reveals that the optimal relay matrix only contains the parallel components to the column space of $\Hv$, i.e., $\hat{\Qv}=\Qv_{\|}$
that can be generally expressed as $\Qv_{\|}=\overline{\Bv}\Hv^H$ for any matrix $\overline{\Bv}\in\mathbb{C}^{N_R\times N_S}$
since the row space of $\Hv^H$ determines the row space of $\Qv_{\|}$.
Now, let us define $\Pv\in\mathbb{C}^{N_S\times N_S}$ as an arbitrary square invertible matrix.
Then, the optimal relay matrix can be further extended to $\hat{\Qv}=\Bv\Pv\Hv^H$ for any matrix
$\Bv\in\mathbb{C}^{N_R\times N_S}$, and thus the proof is completed.

\section{Achievability Proof of $d^*(r)$} \label{Appendix:Proof of Theorem1}
According to \cite{Tang:07}, the optimal relay matrix $\hat{\Qv}$ which maximizes the MI (\ref{eq:max MI}) is generally written by
$\Qv=\widetilde{\Uv}_g\Xv(\Iv_M+\rho\widetilde{\mathbf{\Lambda}}_h)^{-1/2}\widetilde{\Vv}_h^H$, where $\Xv\in\mathbb{C}^{M\times M}$ may be any matrix and
$\widetilde{\Vv}_h$ designates a matrix constructed by the first $M$ columns of the left singular matrix of $\Hv=\Vv_h\mathbf{\Lambda}_h^{1/2}\Uv_h^H$.
Then, it is shown that the MI maximization problem in (\ref{eq:max MI}) is equivalently changed to
\bea
\mathcal{I}&=&\max_{\mathbf{X}}\frac{1}{2}\log\frac{\left|\Iv_M+\rho\widetilde{\mathbf{\Lambda}}_h\right|\left|\Iv_M+\Xv^H\widetilde{\mathbf{\Lambda}}_g\Xv\right|}
{\left|\Iv_M+\rho\widetilde{\mathbf{\Lambda}}_h+\Xv^H\widetilde{\mathbf{\Lambda}}_g\Xv\right|}\nonumber\\
&&s.t.~~\text{Tr}(\Xv\Xv^H)\leq P_R=N_S\rho.\nonumber
\eea

Instead of the optimal solution \cite{Tang:07},
let us consider a suboptimal $\Xv=\sqrt{\rho}\Iv$ which satisfies the above power constraint.
Then, we obtain the MI lower-bound as follows:
\bea
\mathcal{I}\geq\frac{1}{2}\log\frac{\left|\Iv_M+\rho\widetilde{\mathbf{\Lambda}}_h\right|\left|\Iv_M+\rho\widetilde{\mathbf{\Lambda}}_g\right|}
{\left|\Iv_M+\rho\widetilde{\mathbf{\Lambda}}_h+\rho\widetilde{\mathbf{\Lambda}}_g\right|}~~~~~~~~~~~~~~~~~~~~~~~~~~~~~~\nonumber\\
=\frac{1}{2}\log\frac{\prod_{k=1}^M\left[(1+\rho^{1-\alpha_k})(1+\rho^{1-\beta_k})\right]}
{\prod_{k=1}^M\left(1+\rho^{1-\alpha_k}+\rho^{1-\beta_k}\right)}~~~~~~~~~~~~~~~~~~~~~~~~\nonumber\\
\doteq\frac{1}{2}\log\bigg(\prod_{k=1}^M\rho^{\max(0,1-\alpha_k,1-\beta_k,2-\alpha_k-\beta_k)-\max(0,1-\alpha_k,1-\beta_k)}\bigg)\nonumber\\
=\frac{1}{2}\log\bigg(\prod_{k=1}^M\rho^{\min[(1-\alpha_k)^+, (1-\beta_k)^+]}\bigg),~~~~~~~~~~~~~~~~~~~~~~~~~~~\nonumber
\eea
where $\alpha_k$ and $\beta_k$ are defined in (\ref{eq:alpha beta}), and
setting the target rate as $R(\rho)=r\log\rho$, it follows
\bea
P_{\text{out}}&\dot{\leq}&P\left(\frac{1}{2}\log\left(\prod_{k=1}^M\rho^{\min[(1-\alpha_k)^+, (1-\beta_k)^+]}\right)<R(\rho)\right)\nonumber\\
&=&P\left(\sum_{k=1}^N\min[(1-\alpha_k)^+, (1-\beta_k)^+]<2r\right)\nonumber\\
&=&P\left(\mathcal{E}\right)\nonumber
\eea
where $\mathcal{E}=\{\sum_{k=1}^N\min[(1-\alpha_k)^+, (1-\beta_k)^+]<2r\}$ denotes the outage event.
The second equality is due to the fact that we have $\rho\lambda_{g,k}=\rho^{1-\beta_k}=0$, i.e., $\beta_k>1$ for $k=N+1,\ldots,M$.
Now, defining $P(\mathcal{E})\doteq\rho^{-d(r)}$ and applying the Varadhan's lemma \cite{Kumar:09} \cite{Zheng:03} by using the pdfs of random vectors
$\av=[\alpha_1,\ldots,\alpha_M]$ and $\bv=[\beta_1,\ldots,\beta_{L=\min(N_R,N_D)}]$ given in (\ref{eq:pdf a}) and (\ref{eq:pdf b}),
we obtain the outage exponent as
\bea
d(r)&\doteq&\inf_{\{\mathbf{a},\mathbf{b}\}\in\mathcal{E}, \forall\alpha_k>0,\forall\beta_k>0}\sum_{i=1}^M(N_S+N_R-2i+1)\alpha_i\nonumber\\
&&~~~~~~~~~~~~~~~+~\sum_{i=1}^L(N_R+N_D-2i+1)\beta_i,\nonumber\\
&=&\min\bigg(\sum_{i=2r+1}^M(N_S+N_R-2i+1),\nonumber\\
&&~~~~~~~~~~~~~~~~~~~~~~\sum_{i=2r+1}^L(N_R+N_D-2i+1)\bigg)\nonumber\\
&=&(N_R-2r)(\min(N_S,N_D)-2r),\nonumber
\eea
and thus the proof is completed.

\section{Proof of Lemma \ref{Lemma:Lemma3}} \label{Appendix:Proof of Lemma3}

From the definition of $\Rv_y$ in (\ref{eq:relay power constraint}), it follows
\bea\label{eq:Ry2}
\Rv_y&=&(\Hv^H\Hv+\rho^{-1}\Iv_{N_S})^{-1}\Hv^H(\rho\Hv\Hv^H+\Iv_{N_S})\nonumber\\
&&\times\Hv(\Hv^H\Hv+\rho^{-1}\Iv_{N_S})^{-1}\nonumber\\
&\overset{(a)}{=}&\rho\Hv^H\Hv(\Hv^H\Hv+\rho^{-1}\Iv_{N_S})^{-1}\nonumber\\
&=&\rho(\Hv^H\Hv+\rho^{-1}\Iv_{N_S}-\rho^{-1}\Iv_{N_S})(\Hv^H\Hv+\rho^{-1}\Iv_{N_S})^{-1}\nonumber\\
&=&\rho\Iv_{N_S}-(\Hv^H\Hv+\rho^{-1}\Iv_{N_S})^{-1},
\eea
where we obtain (a) by invoking the matrix inversion lemma.
Since $\Av-\Bv=\Cv$ implies that $\Av\succeq\Cv$ for $\Av,\Bv,\Cv\in\mathbb{S}^+$,
it is obvious from (\ref{eq:Ry2}) that $\Rv_y\preceq\rho\Iv_{N_S}$ and the proof is completed.

\section{Proof of Lemma \ref{Lemma:Lemma4}} \label{Appendix:Proof of Lemma4}

By definition, $\Rv_z$ in (\ref{eq:Rz}) is rephrased as $\Rv_z=\rho\Iv_{N_S}+(\Hv^H\Hv)^{-1}$, from which it is immediate that $\rho\Iv_{N_S}\preceq\Rv_z$.
Meanwhile, since we have $\Hv^H\Hv\succeq\lambda_{h,N_S}\Iv_{N_S}$,
one can easily show that $\Rv_z\preceq(\rho+\lambda_{h,N_S}^{-1})\Iv_{N_S}=\rho(1+\rho^{\alpha_{N_S}-1})\Iv_{N_S}$.
From the Varadhan's lemma as in \cite{Zheng:03}, it is also true that $(1+\rho^{\alpha_{N_S}-1})\doteq\rho^0$, because
$\alpha_{N_S}$ is smaller than $1$ with probability $1$.
Therefore, we finally obtain $\rho\Iv_{N_S}\preceq\Rv_z\dot{\preceq}\rho\Iv_{N_S}$ and the proof is concluded.

\section{Eigenvalue Distribution of Rayleigh Product Channels}\label{Appendix:Rayleigh Product}
Let $\Gv$,$\Hv$ be $n\times l$, $l\times m$ independent matrices with i.i.d. entries distributed as $\mathcal{CN}(0,1)$.
We define a positive semi-definite matrix $\Av=\Hv^H\Gv^H\Gv\Hv$ with eigenvalues $\lambda_{t,1}>\cdots>\lambda_{t,N}$ and
$\gamma_k\triangleq-\log\lambda_{t,k}/\log\rho$ for $k=1,\ldots,N$. Then, denoting the pdf of a random vector
$\cv=[\gamma_1,\ldots,\gamma_N]$ as $f(\cv)\doteq\rho^{-\theta(\cv)}$, the exponential order $\theta(\cv)$ is given as follows:
\begin{itemize}
\item When $l<\min(m,n)$,
\bea\label{eq:fc1}
\theta_1(\cv)=\!\!\!\!\sum_{k=1}^{l-|m-n|}\left(\overline{n}+1-2k+\left\lfloor\frac{l+k+|m-n|}{2}\right\rfloor\right)\gamma_k\nonumber
\eea
\vspace{-10pt}
\bea
+\sum_{l-|m-n|+1}^l\left(\overline{n}+l+1-2k\right)\gamma_k
\eea
In this case, $m$ and $n$ can be exchanged by the reciprocity property of MIMO channels.
\item When $n\leq l\leq m$ or $m\leq l\leq n$,
\bea\label{eq:fc2}
\theta_2(\cv)=\!\!\!\!\sum_{k=1}^{l-|m-n|}\left(\overline{p}+1-2k+\left\lfloor\frac{l+k+|m-n|}{2}\right\rfloor\right)\gamma_k\nonumber
\eea
\vspace{-10pt}
\bea
+\sum_{l-|m-n|+1}^{\overline{p}}\left(\overline{n}+l+1-2k\right)\gamma_k
\eea
\item When $n\leq m< l$ or $m\leq n< l$,
\bea\label{eq:fc3}
\theta_3(\cv)=\!\!\!\!\sum_{k=1}^{\overline{q}-|l-\overline{p}|}\left(\overline{p}+1-2k+\left\lfloor\frac{\overline{q}+k+|l-\overline{p}|}{2}\right\rfloor\right)\gamma_k\nonumber
\eea
\vspace{-10pt}
\bea
+\sum_{\overline{q}-|l-\overline{p}|+1}^{\overline{p}}\left(m+n+1-2k\right)\gamma_k
\eea
\end{itemize}
where $\overline{p}\triangleq\min(m,n)$ and $\overline{q}\triangleq\max(m,n)$. For more details, please refer to \cite{SYang:11}.

\bibliographystyle{IEEEtr}

\input{bibliography.filelist}

\end{document}